\newtheorem{Theorem}{Theorem}[section]
\newtheorem{Lemma}{Lemma}[section]
\newtheorem{Remark}{Remark}[section]
\newtheorem{Definition}{Definition}[section]
\newtheorem{Proposition}{Proposition}[section]
\newtheorem{Example}{Example}[section]
\makeatletter \@addtoreset{equation}{section} \makeatother
\newcommand{\rmnum}[1]{\romannumeral#1}
\newcommand{\Rmnum}[1]{\expandafter\@slowromancap\romannumeral #1@}
\begin{document}
\title{New Constructions of Reversible DNA Codes}

\author{Xueyan Chen, Whan-Hyuk Choi, Hongwei Liu}
\author{Xueyan Chen\textsuperscript{1}, Whan-Hyuk Choi\textsuperscript{2}, Hongwei Liu\textsuperscript{1}}
\date{\textsuperscript{1} School of Mathematics and Statistics, Central China Normal University, Wuhan 430079, China\\
\textsuperscript{2}Department of Mathematics, Kangwon National University, 1 Gangwondaehakgil, Chuncheon 24341, Republic of Korea}
\maketitle

\insert\footins{\small{\it Email addresses}:
xueyanchen@mails.ccnu.edu.cn(Xueyan Chen), whchoi@kangwon.ac.kr (Whan-Hyuk Choi), hwliu@ccnu.edu.cn(Hongwei Liu)}

{\centering\section*{Abstract}}
 \addcontentsline{toc}{section}{\protect Abstract}
 \setcounter{equation}{0}
DNA codes have many applications, such as in data storage, DNA computing, etc. Good DNA codes have large sizes and satisfy some certain constraints. In this paper,  we present a new construction method for reversible DNA codes. We show that the DNA codes obtained using our construction method can satisfy some desired constraints and the lower bounds of the sizes of some DNA codes are better than the known results. We also give new lower bounds on the sizes of some DNA codes of lengths $80$, $96$ and $160$ for some fixed Hamming distance $d$.

\medskip
\noindent{\large\bf Keywords: }\medskip
Reversible DNA Codes, Reversible Composite Matrices, Group Codes, Group Rings\\
\noindent{\bf2010 Mathematics Subject Classification}: 15B33, 20C05, 94B05.

\section{Introduction}
The research on DNA codes originated in 1994 when Adleman solved a computationally difficult mathematical problem by introducing an algorithm using DNA strands and molecular biology tools~(\!\!\cite{b3}). Since then, many other applications for DNA codes have been discovered, such as digital media storage~(\!\!\cite{b9},~\cite{b12}), data encryption and combinatorial problems~(\!\!\cite{b4},~\cite{b31}), and cracking the DES cryptosystem ~(\!\!\cite{b6}).\par
DNA molecules consist of two complementary strands where each strand is a sequence of four different nucleotide bases, called adenine (A), cytosine(C), guanine (G) and thymine (T). It is well known that a DNA code satisfies the following constraints: ($\rmnum{1}$) Hamming distance constraint, ($\rmnum{2}$) reverse constraint, ($\rmnum{3}$) reverse-complement constraint, and ($\rmnum{4}$) fixed GC-content constraint. There are two additional significant constraints that have been adequately investigated in the literature in addition to the above ones: free from secondary structure; no continuous repetition of the identical sub-string(s)(see~\cite{b35},~\cite{b21},~\cite{b33},~\cite{b34}).\par

Some known methods for designing DNA codes that satisfy certain conditions include: linear constructions~(\!\!\cite{b20}), cyclic and extended cyclic constructions~(\!\!\cite{b1},~\cite{b2}) and constructions from reversible self-dual codes over $\mathbb{F}_{4}$~(\!\!\cite{b24}). In [33], the authors defined coterm polynomial and generating methods and $u^2$-module codes. These definitions were transformed into $G$-codes and group ring codes for DNA codes, which is the origin of using $G$-codes and group ring codes to construct DNA codes. In~\cite{b10},~\cite{b19},~\cite{b27}, the authors studied linear codes derived from group ring elements to generate DNA codes that satisfy certain constraints, and discovered some new lower bounds on the sizes of some DNA codes. Moreover in~\cite{b19}, the authors assumed that the image $\sigma(v)$ of the map $\sigma$ defined in~\cite{b23} could be divided into a block reversible matrix. Then for each block of $\sigma(v)$,  by mapping with different finite groups to construct a new matrix, and the authors generated some DNA codes using the obtained new matrices. Reversibility is a desired property for DNA codes, the authors (\!\!\cite{b19}) made a connection between reversible composite $G$-codes and DNA codes. They also presented an algorithm that determined the size of an $l$-conflict free DNA code(for some positive integer $l\leq n$) in which the codewords are free from secondary structures.\par

In this paper, we improve the method given in \cite{b19} for constructing reversible DNA codes using group code (for specific details, please see Theorem \ref{th.3.3} and Remark \ref{r.3.1}). We provide a group $G$ which ensures that our construction result is a general construction. Moreover, many of the codes we obtain have better lower bounds on the sizes of some known DNA codes and some of our DNA codes are new.\par

The rest of this article is organized as follows. In Section $2$, we recall some concepts in coding theory. We also introduce the definitions and properties of DNA codes, group rings, group codes, composite matrices and composite group codes. In Section $3$, we propose a new method for constructing reversible composite group codes and provide relevant proofs. In Section $4$, we give some specific groups and obtain some particular forms of composite matrices using the construction method in Section $3$. In Section $5$, we generate some reversible composite group codes on Magma using the composite matrices obtained in Section $4$, then obtain some reversible DNA codes, and compare their parameters and some lower bounds to previous ones. In Section $6$, we draw a conclusion about our article.

\section{Preliminaries}

\subsection{DNA Codes}

In this section, we recall some basic definitions of linear codes, DNA codes and some constraints of DNA codes.\par
Let $\mathbb{F}_{q}$ be the finite field of order $q$, where $q=p^{e}$ is a power of a prime number $p$. A code of length $n$ over $\mathbb{F}_{q}$ is a subset of  $\mathbb{F}_{q}^n$ and a linear code of length $n$ over $\mathbb{F}_{q}$ is a subspace of  $\mathbb{F}_{q}^n$, and we call an element of a linear code as a codeword. The Hamming distance $d(x, y)$ between two codewords is the number of coordinates in which $x$ and $y$ are distinct. The minimum Hamming distance $d$ of a linear code $\mathcal{C}$ is defined as min$\{d(x, y) | x \neq y, \forall x, y \in \mathcal{C} \}$. Let $S_{D_{4}}=\{A,T,C,G\}$ denote the set of nucleotides in DNA(Represented as adenine (A), cytosine (C), guanine (G) and thymine (T)). We use $\,\hat{}\,$ to denote the Watson-Crick complement of a nucleotide, i.e., $\hat{A}= T, \hat{T} = A, \hat{C} = G$ and $\hat{G}= C$. Let $S_{D_{4}}^{n}=\{(x_{1},\dots,x_{n})|x_{i}\in S_{D_{4}}\}$, for any $x=(x_{1},\dots,x_{n})\in S_{D_{4}}^{n}$, let $x^{r} = (x_{n}, \dots , x_{1})$ be the reverse of $x$ and $x^{c} = (\hat{x}_{1},\dots ,\hat{x}_{n})$ be the complement of $x$. Moreover, let ${x}^{rc} = (\hat{x}_{n},\dots, \hat{x}_{1})$ be the reverse and complement of $x$.

\begin{Definition}
Assume the notation is as given above. A DNA code $D$ of length $n$ is defined as a subset of $S_{D_{4}}^{n}$, such that $D$ satisfies some or all of the following constraints:\par
(\rmnum{1}) The Hamming distance constraint (HD):\par
 \qquad $d({\bf x}, {\bf y}) \geq d, \forall\,{\bf x}, {\bf y} \in D$, for some prescribed Hamming distance $d$.\par
(\rmnum{2}) The reverse constraint (RV):\par
\qquad $d({\bf x}^{r}, {\bf y}) \geq d, \forall\,{\bf x}, {\bf y} \in D$, including $x = y$ for some  prescribed Hamming distance $d$.\par
(\rmnum{3}) The reverse-complement constraint (RC):\par
\qquad $d({\bf x}^{rc}, {\bf y}) \geq d, \forall\,{\bf x}, {\bf y} \in D$, including $x = y$ for some  prescribed Hamming distance $d$.\par
(\rmnum{4}) The fixed GC-content constraint (GC):\par
\qquad The set of codewords with length $n$, distance $d$ and GC weight $w_{\bf x_{\text{DNA}} }$, where $w_{\bf x_{\text{DNA}} }$ is the total number of G's and C's present in the DNA strand, i.e.,\par
\begin{center}
 $w_{\bf x_{\text{DNA}} }= |\{x_{i} | {\bf x} = (x_{i}), x_{i}=C ~\text{or}~ G\}|$.
\end{center}

\end{Definition}
Note that the first constraint must be satisfied among the four constraints mentioned above, while the remaining constraints can satisfy some combination forms.
In this paper, the fixed $GC$-content of a DNA code $D$ is simply $\lfloor \frac{n}{2}\rfloor$, where $n$ is the length of the code.\par
 Let $\mathbb{F}_{4} = \{0, 1, \omega, \omega^{2}\}$, where $\omega^{2}=\omega+1$, be the finite field of order $4$. Let $\eta$ be a bijective correspondence between $\mathbb{F}_{4}$ and the DNA alphabet $S_{D_{4}}= \{A, T, C, G\}$ given by
\begin{center}
$\eta : \mathbb{F}_{4} \longrightarrow S_{D_{4}}$,
\end{center}
with $\eta(0) = A, \eta(1) = T, \eta(\omega) = C$ and $\eta(\omega^{2}) = G$. The bijection $\eta$ can be extended from $\mathbb{F}^n_4$ to $S^n_{D_{4}}$ naturally. Therefore, a DNA code can be identified with a code over $\mathbb{F}_{4}$. \par
We denote the complete weight enumerator of a code $\mathcal{C}$ over $\mathbb{F}_{4}$ by
\begin{center}
$\text{CWE}_{\mathcal{C}}(X_{1},X_{2}, X_{3}, X_{4}) =\sum\limits_{{\bf c} \in \mathcal{C}}X_{1}^{n_{0}({\bf c})}X_{2}^{n_{1}({\bf c})}X_{3}^{n_{\omega}({\bf c})}X_{4}^{n_{\omega^{2}}({\bf c})}$,
 \end{center}
where $n_{s}({\bf c})$ denotes the number of occurrences of $s$ in a codeword ${\bf c}$. We identify the complete weight enumerator of a DNA code $D$ with that of a code $\mathcal{C}$ over $\mathbb{F}_{4}$, where $D = \eta(\mathcal{C})$. The $GC$-weight of a codeword ${\bf c} \in \mathcal{C}$ is the sum of $n_{\omega}({\bf c})$ and $n_{\omega^{2}}({\bf c})$. Therefore, if we let
\begin{center}
$\text{GCW}_{\mathcal{C}}(X_{1}, X_{2}) = \text{CWE}_{\mathcal{C}}(X_{1}, X_{1}, X_{2}, X_{2})$,
\end{center}
then $\text{GCW}_{\mathcal{C}}(X_{1}, X_{2})$ is the $GC$-weight enumerator of a code $\mathcal{C}$, where the coefficient of $X_2^{i}$ is the same as the number of codewords with GC-weight $i$.\par

\begin{Remark}
(1)~In the later calculation, the number of codewords in the DNA code of length $n$ that satisfies the GC-content of $\lfloor \frac{n}{2}\rfloor$ corresponds to the coefficient of $X_2^{\lfloor \frac{n}{2}\rfloor }$ in the $GC$-weight enumerator.\par
(2)~In addition to the constraints mentioned above, there are two other critical constraints that have been frequently researched in the literature, since these two constraints are not the focus of this work, we omit it here, please refer to (\!\!\cite{b21},~\cite{b35},~\cite{b33},~\cite{b34}) for more information.
\end{Remark}

Let $A_{4}^{R}(n, d)$ denote the maximum size of a DNA code for a given distance $d$ and length $n$ that satisfies the HD and RV constraints. Let $A_{4}^{RC}(n, d)$ be the maximum size of a DNA code of length $n$ satisfying the HD and RC constraints for a given $d$, $A_{4}^{GC}(n,d, \lfloor \frac{n}{2}\rfloor)$ be the maximum size of a DNA code of length $n$ satisfying the HD constraint for a given $d$ with a constant $GC$-weight $\lfloor \frac{n}{2}\rfloor$, and $A_{4}^{RC,GC}(n, d,\lfloor \frac{n}{2}\rfloor)$ the maximum size of a DNA code of length $n$ satisfying the HD and RC constraints for a given $d$ with a constant $GC$-weight $\lfloor \frac{n}{2}\rfloor$. In~\cite{b29}, for even $n$, the following equality is given:
\begin{equation}\label{eq:0}
A_{4}^{RC}(n,d)=A_{4}^{R}(n,d).
\end{equation}

A DNA code is called a {\it better DNA code} if it meets all the above constraints with better parameters, or if it is a DNA code with larger size under the same parameters. In this work, we construct some new DNA codes to improve the upper bounds of sizes, that is, these DNA codes are better DNA codes.\par

\subsection{Circulant Matrices, Group Rings and Group Codes}

In this section, we recall some definitions of several special forms of matrices that we use later in this paper. We also provide some necessary definitions for group rings, which will be used to construct the desired codes.\par
\begin{Definition}
Let $R$ be a finite ring, let $l$ be a positive integer. An $l$\text{-}circulant matrix is a matrix where each row is shifted $l$ elements to the right relative to the preceding row. We label the $l$\text{-}circulant matrix as $l\text{-}circ(\alpha_{1},\alpha_{2},\dots,\alpha_{n})$, where $\alpha_{i}$ are the ring elements appearing in the first row. Specifically, the $l\text{-}circ(\alpha_{1},\alpha_{2},\dots,\alpha_{n})$ matrix  is:
$$l\text{-}circ(\alpha_{1},\alpha_{2},\dots,\alpha_{n})
=\begin{pmatrix}
\alpha_{1}&\alpha_{2}&\alpha_{3}&\dots&\alpha_{n-1}&\alpha_{n}\\
\alpha_{n-l+1}&\alpha_{n-l+2}&\alpha_{n-l+3}&\dots&\alpha_{n-l-1}&\alpha_{n-l}\\
\alpha_{n-2l+1}&\alpha_{n-2l+2}&\alpha_{n-2l+3}&\dots&\alpha_{n-2l-1}&\alpha_{n-2l}\\
\vdots&\vdots&\vdots&&\vdots&\vdots\\
\alpha_{l+1}&\alpha_{l+2}&\alpha_{l+3}&\dots&\alpha_{l-1}&\alpha_{l}\\
\end{pmatrix}_{n \times n}
.$$
\end{Definition}
Thus, the circulant matrix  is $1\text{-}circ(\alpha_{1},\alpha_{2},\dots,\alpha_{n})$-circulant matrix. Let $A=(a_{ij})_{n\times n}$ be a square matrix of degree $n$ over $R$. Let $A^{T}$ denote the transpose of $A$. The flip of $A$, denoted by $A^{F}$, is $A^{F} = (a_{n-i+1,n-j+1})_{n\times n}$.\par
We shall now give the standard definition of group rings.\par
\begin{Definition}
Let $G$ be a finite group of order $n$ and let $R$ be a finite ring. Let $RG=\{v=\sum\limits_{i=1}^{n}\alpha_{g_i}g_{i}|\alpha_{g_i} \in R, g_{i} \in G\}$.  Let $\sum\limits_{i=1}^{n}\alpha_{g_i}g_{i}, \sum\limits_{i=1}^{n}\beta_{g_i}g_{i} \in RG$, define
\begin{center}
$\sum\limits_{i=1}^{n}\alpha_{g_i}g_{i}+\sum\limits_{i=1}^{n}\beta_{g_i}g_{i}=
\sum\limits_{i=1}^{n}(\alpha_{g_i}+\beta_{g_i})g_{i}. $\\
$(\sum\limits_{i=1}^{n}\alpha_{g_i}g_{i})(\sum\limits_{j=1}^{n}\beta_{g_j}g_{j})=
\sum\limits_{i,j}\alpha_{g_i}\beta_{g_j}g_{i}g_{j}
=\sum\limits_{k=1}^{n}(\sum\limits_{g_{i}g_{j}=g_{k}}\alpha_{g_i}\beta_{g_j})g_{k}.$
\end{center}
Then RG is a ring (called as group ring) under the above two operations. If the ring $R$ is a field then $RG$ is said to be a group algebra.\par
\end{Definition}

In~\cite{b13}, Dougherty {\it et al} derived a matrix using the elements in a group ring, and then used this matrix to generate linear codes (i.e., group codes). Now we  first recall the construction process of group codes. The following matrix construction was given by Hurley in\cite{b23}. The same matrix construction was used to study group codes over Frobenius rings in\cite{b13}.\par
Let $G$ be a finite group, let $\{g_{1},g_{2}, \dots , g_{n}\}$ be a fixed listing of the elements of $G$ and $R$ be a finite commutative Frobenius ring. Let $M(G)$ be the following matrix:
\begin{eqnarray}\label{eq.2.2}
M(G)=\begin{pmatrix}
g_{1}^{-1}g_{1}&g_{1}^{-1}g_{2}&\dots&g_{1}^{-1}g_{n}\\
g_{2}^{-1}g_{1}&g_{2}^{-1}g_{2}&\dots&g_{2}^{-1}g_{n}\\
\vdots&\vdots&&\vdots\\
g_{n}^{-1}g_{1}&g_{n}^{-1}g_{2}&\dots&g_{n}^{-1}g_{n}\\
\end{pmatrix}_{n\times n}.
\end{eqnarray}

Let $v=\sum\limits_{i=1}^{n}\alpha_{g_i}g_{i}\in RG.$ Let $M(RG,v)$ denote the $R$-matrix constructed from $M(G)$ and $v$ as follows:

\begin{eqnarray}\label{eq.2.3}
M(RG,v)=\begin{pmatrix}
\alpha_{g_{1}^{-1}g_{1}}&\alpha_{g_{1}^{-1}g_{2}}&\dots  &\alpha_{g_{1}^{-1}g_{n}}\\
\alpha_{g_{2}^{-1}g_{1}}&\alpha_{g_{2}^{-1}g_{2}}&\dots  &\alpha_{g_{2}^{-1}g_{n}}\\
\vdots   &\vdots & &\vdots\\
\alpha_{g_{n}^{-1}g_{1}}&\alpha_{g_{n}^{-1}g_{2}}&\dots  &\alpha_{g_{n}^{-1}g_{n}}\\
\end{pmatrix}_{n \times n}.
\end{eqnarray}

Let  $$\sigma: RG\longrightarrow M_{n}(R), ~~v\mapsto \sigma(v)\triangleq M(RG,v).$$

In~\cite{b23}, the author proved that $\sigma$ is an injection ring homomorphism.

\begin{Definition}(\!\!\cite{b13})
Assume the notation is as given above. Let
\begin{eqnarray}\label{eq.2.4}
\mathcal{C}(v)=\langle  \sigma(v)\rangle
\end{eqnarray}
be a linear code generated by $\sigma(v)$ over $R$, where $\langle\sigma(v)\rangle$ is a submodule of $R^n$ generated by the rows of $\sigma(v)$. The code $\mathcal{C}(v)$ is also called a $G$-code.
\end{Definition}
Moreover, in~\cite{b13}, Dougherty {\it et al} proved that the group code constructed in this way is a left ideal of $RG$ under the following corresponding $$\Psi:R^{n}\longrightarrow RG, (\alpha_{1},\dots,\alpha_{n}) \mapsto \sum\limits_{i=1}^{n} \alpha_{i}g_{i}.$$
Thus the resulting group code has the group $G$ as a subgroup of its automorphism group. For more details on group codes generated from group rings please see~(\!\!\cite{b13}). From now on, we refer to $G$-codes, which means the codes given in Equation~\eqref{eq.2.4}.\par

\subsection{Composite Matrices and Composite Group Codes }
We now recall the composite matrix construction which was first given in~\cite{b17}.\par
Let $R$ be a finite commutative Frobenius ring. Let $\{g_{1}, g_{2},\dots, g_{n}\}$ be a fixed listing of the elements of $G$. Let $v=\sum\limits_{i=1}^{n} \alpha_{g_{i}}g_{i} \in RG$. Suppose  $1<r<n,  r~|~n $, and let $H_{i}$ be an arbitrary group of order $r$ with $i\in\{1, 2, \dots , \frac{n^2}{r^{2}}\}$. Let $\{h_{i1}, h_{i2},\dots , h_{ir}\}$ be a fixed listing of the elements of $H_{i}$. Let $G_{r}$ be a subset of $G$ containing $r$ distinct elements of $G$. Let $l$ be a positive integer with $1\leq l\leq \frac{n^2}{r^{2}}$.  For $i,l\in  \{1, 2, \dots , \frac{n^2}{r^2} \}$, define a bijection $\phi_{i,l}$ as follows:
$$
\phi_{i,l}: H_{i} \rightarrow G_{r}, h_{it} \mapsto g_{j}^{-1}g_{k+t-1}, ~ \forall ~ 1\le t\le r,
$$
where the values of the pairs $(j,k)$, and  $l$  are defined as follows:
$$
(j,k)  =\begin{cases}
(1,(l-1)r+1),&1\leq l \leq \frac{n}{r};\\
(r+1,(l-\frac{n}{r}-1)r+1),&\frac{n}{r}+1\leq l \leq \frac{2n}{r};\\
\cdots & \cdots \\
((\frac{n}{r}-1)r+1,[l-(\frac{n}{r}-1)\frac{n}{r}-1]r+1), &(\frac{n}{r}-1)\frac{n}{r}+1\leq l \leq \frac{n^2}{r^2}.
\end{cases}
$$
\begin{Definition}\label{def.2.5}
Assume the notation is as given above. The following matrix
$$
\Omega(v)=\begin{pmatrix}
A_{1}&A_{2}&\dots  &A_{\frac{n}{r}}\\
A_{\frac{n}{r}+1} &A_{\frac{n}{r}+2}&\dots&A_{\frac{2n}{r}}\\
\vdots   &\vdots & &\vdots\\
A_{(\frac{n}{r}-1)\frac{n}{r}+1} &A_{(\frac{n}{r}-1)\frac{n}{r}+2}&\dots &A_{\frac{n^{2}}{r^{2}}}
\end{pmatrix}_{n\times n}
$$
is called a composite matrix, where at least one block has the following form:
$$
A_{l}=\begin{pmatrix}
\alpha_{g_{j}^{-1}g_{k}}&\alpha_{g_{j}^{-1}g_{k+1}}&\dots&\alpha_{g_{j}^{-1}g_{k+(r-1)}}\\
\alpha_{\phi_{i,l}(h_{i2}^{-1}h_{i1})}&\alpha_{\phi_{i,l}(h_{i2}^{-1}h_{i2})}
&\dots&\alpha_{\phi_{i,l}(h_{i2}^{-1}h_{ir})}\\
\vdots   &\vdots            &      &\vdots\\
\alpha_{\phi_{i,l}(h_{ir}^{-1}h_{i1})}&\alpha_{\phi_{i,l}(h_{ir}^{-1}h_{i2})}
&\dots&\alpha_{\phi_{i,l}((h_{ir}^{-1}h_{ir})}\\
\end{pmatrix}_{r \times r}\triangleq A'_{l},
$$
and the other blocks have the following form:\par
$$
A_{l}=\begin{pmatrix}
\alpha_{g_{j}^{-1}g_{k}}&\alpha_{g_{j}^{-1}g_{k+1}}&\dots&\alpha_{g_{j}^{-1}g_{k+(r-1)}}\\
\alpha_{g_{j+1}^{-1}g_{k}}&\alpha_{g_{j+1}^{-1}g_{k+1}}&\dots&\alpha_{g_{j+1}^{-1}g_{k+(r-1)}}\\
\vdots   &\vdots          &      &\vdots\\
\alpha_{g_{j+(r-1)}^{-1}g_{k}}&\alpha_{g_{j+(r-1)}^{-1}g_{k+1}}&\dots&\alpha_{g_{j+(r-1)}^{-1}g_{k+(r-1)}}\\
\end{pmatrix}_{r\times r}.
$$
\end{Definition}


\begin{Definition}(\!\!\cite{b18})
Assume the notation is as given above. Let
\begin{eqnarray}\label{eq.2.5}
\mathcal{D}(v)=\langle  \Omega(v) \rangle
\end{eqnarray}
be a linear code generated by $\Omega(v)$ over $R$, where $\langle  \Omega(v) \rangle$ is a submodule of $R^n$ generated by the rows of $\Omega(v)$. The code $\mathcal{D}(v)$ is also called a composite $G$-code.
\end{Definition}

The main advantage of matrix $\Omega(v)$ is that the codes generated by $\Omega(v)$ can obtain parameters that cannot be obtained from the codes generated by $\sigma(v)$. For several instances of code families constructed from the composite matrix $\Omega(v)$, please see~(\!\!\cite{b14},~\cite{b15},~\cite{b18}). From now on, when we refer to composite $G$-codes, we mean codes given in Equation \eqref{eq.2.5}.\par

\section{Reversible Composite Group Codes}
\subsection{Reversible Group Codes}
We now review a crucial finding in~\cite{b10}, which shows that it is possible to create reversible $G$-codes for specific groups. We first provide relevant definitions from~\cite{b10}.\par

\begin{Definition}
A code $\mathcal{C}$ is said to be reversible of index $k$ if $\mathbf{a}_{i}$ is a vector of length $k$ ( i.e.,  $\mathbf{a}_{i}=(a_{i1},a_{i2},\dots,a_{ik})$ ) and ${\bf c} =(\mathbf{a}_{1},\mathbf{a}_{2},\dots ,\mathbf{a}_{s-1},\mathbf{a}_{s}) \in \mathcal{C}$ implies that ${\bf c}^{r} =(\mathbf{a}_{s},\mathbf{a}_{s-1},\dots ,\mathbf{a}_{2},\mathbf{a}_{1}) \in \mathcal{C}.$
\end{Definition}

\begin{Example}
Let $\mathcal{C}=\{a(1,1,\omega^2,0,0,0,1,1,\omega^2)+b(0,0,0,1,1,1,\omega,\omega,1)\,|\, a,b \in \mathbb{F}_4\}$ be a $[9,2]$-linear code over ~$\mathbb{F}_{4}$. For any codeword ~${\bf c} \in \mathcal{C}$, ${\bf c}=(a,a,a\omega^2,b,b,b,a+b\omega,a+b\omega,a\omega^2+b)$, where $a,b \in \mathbb{F}_4$. It is easy to check that ~${\bf c}^{r}=(a+b\omega,a+b\omega,a\omega^2+b,b,b,b,a,a,a\omega^2) =(a^{'},a^{'},a^{'}\omega^2,b^{'},b^{'},b^{'},a^{'}+b^{'}\omega,a^{'}+b^{'}\omega,a^{'}\omega^2+b^{'})\in \mathcal{C}$, where $ a^{'}=a+b\omega, ~b^{'}=b$. Therefore ~$\mathcal{C}$ is a reversible code of index $3$.
\end{Example}

Now we give a special order of elements in a group. Let $G$ be a finite group of order $n=2l$ and let $T =\{e, t_{1},t_{2},\dots,t_{l-1}\}$ be a subgroup of index $2$ in $G$. Let $\beta \in G\backslash T$ be of order $2$. We list the elements of $G=\{g_{1}, g_{2},\dots,g_{n}\}$ as follows:\par
\begin{eqnarray}\label{eq.3.1}
\{e,t_{1},t_{2},\dots,t_{l-1},\beta t_{l-1},\beta t_{l-2},\dots,\beta t_{1},\beta \}.
\end{eqnarray}

The following result was proved in~\cite{b10}.\par
\begin{Theorem}
(\!\!\cite{b10}) Let $R$ be a finite ring. Let $G$ be a finite group of order $n=2l$ and let $T =\{e, t_{1},t_{2},\dots,t_{l-1}\}$ be a subgroup of index $2$ in $G$. Let $\beta \in G\backslash T$ be of order $2$. List the elements of $G$ as in \eqref{eq.3.1}, then any linear $G$-code in $R^{n}$ (a left ideal in $RG$) is a
reversible code of index $1$.
\end{Theorem}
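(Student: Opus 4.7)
My plan is to show that the coordinate-reversal operation simply permutes the rows of the matrix $\sigma(v)$ from \eqref{eq.2.3}; once that is established, the fact that the row span (which equals $\mathcal{C}(v)$) is closed under reversal is immediate by linearity.

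First, I would establish the key indexing identity
$$g_{n-i+1}=\beta g_{i}\qquad\text{for all }1\le i\le n,$$
directly from the listing \eqref{eq.3.1}. For $1\le i\le l$ one reads off $g_i=t_{i-1}$ (with $t_0=e$) and $g_{n-i+1}=\beta t_{i-1}=\beta g_i$. For $l+1\le i\le n$ the same relation follows by applying the previous case to $n-i+1$ and using $\beta^{2}=e$, which also gives $\beta^{-1}=\beta$.

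Second, I would use this identity to compare rows of $\sigma(v)$. Let $R_j$ denote the $j$-th row of $\sigma(v)$, whose $k$-th entry is $\alpha_{g_j^{-1}g_k}$. Then the $k$-th entry of the reversed row $R_j^{r}$ is
$$\alpha_{g_j^{-1}g_{n-k+1}}=\alpha_{g_j^{-1}\beta g_k}=\alpha_{(\beta g_j)^{-1}g_k}=\alpha_{g_{n-j+1}^{-1}g_k},$$
where I used Step~1 in the second and last equalities and $\beta^{-1}=\beta$ in the third. The rightmost expression is precisely the $k$-th entry of $R_{n-j+1}$, so $R_j^{r}=R_{n-j+1}$ for every $j$.

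Third, I would finish by linearity. Every codeword can be written ${\bf c}=\sum_{j=1}^{n}\lambda_j R_j$, and reversing is $R$-linear on $R^{n}$, so
$${\bf c}^{r}=\sum_{j=1}^{n}\lambda_j R_j^{r}=\sum_{j=1}^{n}\lambda_j R_{n-j+1}=\sum_{j=1}^{n}\lambda_{n-j+1}R_j\in\mathcal{C}(v),$$
which is exactly the definition of reversibility of index $1$. The only point requiring some care is the piecewise verification of the identity $g_{n-i+1}=\beta g_{i}$ in Step~1, where the ``first half'' and ``second half'' of the listing \eqref{eq.3.1} must be checked separately; after that the argument is a two-line computation, so I do not expect any deeper obstacle.
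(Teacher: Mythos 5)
Your proof is correct: the identity $g_{n-i+1}=\beta g_i$ holds in both halves of the listing \eqref{eq.3.1}, it gives $R_j^{r}=R_{n-j+1}$ for every row of $\sigma(v)$, and closure of the row span under reversal follows by linearity. The paper itself does not reproduce a proof of this theorem (it is quoted from \cite{b10}), but your argument is exactly the standard one used there and echoed in the paper's own Lemmas~3.1 and~3.2, which establish the same ``reversal permutes the rows'' property blockwise for $A_l'$.
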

Reversibility is a desired property of DNA codes. In~\cite{b10}, Cengellenmis {\it et al} made a connection between reversible $G$-codes and DNA codes. We intend to build connections between reversible composite $G$-codes and DNA codes inspired by~\cite{b10}. In the following subsection, we will provide our construction methods.\par

\subsection{Reversible Composite Group Codes}
Let $G$ be a finite group of order $n$ such that  $4~|~ n$, and let $H_{i}$, where $i = \{1, 2, . . . , s \}$, be finite groups of even order $r$ such that $r$ is a factor of $n$ with $1<r<n$, where $s$ is a positive integer. Let $R$ be a finite commutative Frobenius ring, and $v=\sum\limits_{i=1}^n\alpha_ig_i$. We partition the matrix $\sigma(v)$ into an $\frac{n}{r} \times \frac{n}{r}$ block matrix as follows:

$$
\sigma(v)=\begin{pmatrix}
A_{1}&A_{2}&\dots  &A_{\frac{n}{r}}\\
A_{\frac{n}{r}+1} &A_{\frac{n}{r}+2}&\dots&A_{\frac{2n}{r}}\\
\vdots   &\vdots & &\vdots\\
A_{(\frac{n}{r}-1)\frac{n}{r}+1} &A_{(\frac{n}{r}-1)\frac{n}{r}+2} &\dots  &A_{\frac{n^{2}}{r^{2}}}
\end{pmatrix}_{n \times n},
$$
where each block $A_i$ belongs to $M_r(R)$.

By Definition~\ref{def.2.5}, $\Omega(v)$ can be obtained from $\sigma(v)$. In this subsection, we will add some restrictions on $\Omega(v)$ to ensure that the resulting matrix is reversible, and it will be employed to generate reversible composite group codes.

In general, $\sigma(v)$ is not a reversible block matrix. The following theorem prove that $\sigma(v)$ can be reversible by suitable choices of  a group $G$.

\begin{Theorem}\label{th.3.2}
Let $G = \langle x, y ~|~ x^{r}= y^{\frac{n}{r}} = 1, xy = yx \rangle = C_{r}\times C_{\frac{n}{r}}$, where $n$ is a positive integer such that $4~|~n$, $r$ is a factor of $n$ with $1<r<n$ and $\frac{n}{r}$ is even. Let
{\small \begin{eqnarray}\label{eq.3.2}
v=\sum\limits_{i=0}^{r-1}[\alpha_{g_{i+1}}x^{i}+\alpha_{g_{i+(\frac{n}{r}-1)r+1}}x^{i}y^{\frac{n}{2r}}]
+\sum\limits_{k=1}^{\frac{n}{2r}-1}\sum\limits_{j=0}^{r-1}[\alpha_{g_{j+(2k-1)r+1}}x^{j}y^{k}
+\alpha_{g_{j+2kr+1}}x^{j}y^{\frac{n}{r}-k}].
\end{eqnarray}}
Then the following block matrix:
$$
\sigma(v)=\begin{pmatrix}
A_{1}&A_{2}&\dots  &A_{\frac{n}{r}}\\
A_{\frac{n}{r}+1} &A_{\frac{n}{r}+2}&\dots&A_{\frac{2n}{r}}\\
\vdots   &\vdots & &\vdots\\
A_{(\frac{n}{r}-1)\frac{n}{r}+1} &A_{(\frac{n}{r}-1)\frac{n}{r}+2} &\dots  &A_{\frac{n^{2}}{r^{2}}}
\end{pmatrix}_{n \times n}
$$
is block reversible.
\end{Theorem}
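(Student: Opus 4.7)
My plan is to decode the element listing that equation (3.2) implicitly fixes on $G=C_r\times C_{n/r}$, identify a simple pairing of its size-$r$ blocks, and then convert that pairing into block reversibility of $\sigma(v)$. First I would match the coefficients appearing in (3.2) term by term with $x^iy^{\ell}$ to read off $g_1,\ldots,g_n$. The outcome is that the listing splits into $n/r$ consecutive blocks $B_1,\ldots,B_{n/r}$ of size $r$; block $B_j$ equals $\{x^0y^{\kappa(j)},x^1y^{\kappa(j)},\ldots,x^{r-1}y^{\kappa(j)}\}$ in this order, where the $y$-exponent profile is
\[
\kappa(1)=0,\ \ \kappa(2m)=m,\ \ \kappa(2m+1)=\tfrac{n}{r}-m\ \ (1\le m\le \tfrac{n}{2r}-1),\ \ \kappa(\tfrac{n}{r})=\tfrac{n}{2r};
\]
equivalently, the $y$-exponents across the blocks read $0,\,1,\,\tfrac{n}{r}-1,\,2,\,\tfrac{n}{r}-2,\ldots,\tfrac{n}{2r}$.

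The hinge of the argument is the pairing identity
\[
\kappa\bigl(\tfrac{n}{r}-j+1\bigr)-\kappa(j)\equiv \tfrac{n}{2r}\pmod{\tfrac{n}{r}}\qquad\text{for all } j\in\{1,\ldots,\tfrac{n}{r}\}.
\]
I would verify this by a short case check on $j\in\{1,\tfrac{n}{r},2m,2m+1\}$; each case produces $\pm\tfrac{n}{2r}$ on the left, and these are congruent modulo $\tfrac{n}{r}$. Intuitively, the listing has been arranged so that block $j$ and its ``mirror'' block $n/r-j+1$ differ precisely by the unique order-$2$ element $y^{n/(2r)}$ of the $\langle y\rangle$-factor.

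With this identity in hand, block reversibility reduces to abelian-group bookkeeping. The $(p,q)$-entry of $\sigma(v)$ is $\alpha_{g_p^{-1}g_q}$. Block-reversing row $p$ (i.e., reversing the order of its $n/r$ length-$r$ blocks) replaces each column index $q$ by $q'=(\tfrac{n}{r}-b(q))r+t(q)$, where $b(q)=\lceil q/r\rceil$ and $t(q)=q-(b(q)-1)r$. For the block-reversed row to coincide with some row $p'$ of $\sigma(v)$ I need $g_p^{-1}g_{q'}=g_{p'}^{-1}g_q$ for every $q$, and since $G$ is abelian this is equivalent to $g_{q'}g_q^{-1}$ being a $q$-independent element of $G$. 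Using the listing, $g_q=x^{t(q)-1}y^{\kappa(b(q))}$ and $g_{q'}=x^{t(q)-1}y^{\kappa(n/r-b(q)+1)}$, whence
\[
g_{q'}g_q^{-1}=y^{\kappa(n/r-b(q)+1)-\kappa(b(q))}=y^{n/(2r)}
\]
by the pairing identity, and the right-hand side is indeed independent of $q$. The corresponding $p'$ is the unique index with $g_{p'}=g_p\cdot y^{n/(2r)}$, and $p\mapsto p'$ is an involution of $\{1,\ldots,n\}$ because $y^{n/(2r)}$ has order $2$. Hence block-reversing the rows of $\sigma(v)$ produces a permutation of its own rows, which is exactly what is meant by saying that $\sigma(v)$ is block reversible.

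The only genuinely computational step is the pairing identity; once that is in hand, the argument reduces to the transparent observation that multiplication by the order-$2$ element $y^{n/(2r)}$ permutes $G$. The main obstacle I anticipate is purely notational: keeping the decoding $k\mapsto g_k$ inherited from (3.2) straight across its four ``types'' of indices, so that the block structure $B_1,\ldots,B_{n/r}$ is manifest and the identification $g_{q'}g_q^{-1}=y^{n/(2r)}$ is unambiguous.
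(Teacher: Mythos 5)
Your proof is correct and follows essentially the same route as the paper: both rest on the interleaved listing of $y$-exponents $0,1,\tfrac{n}{r}-1,2,\tfrac{n}{r}-2,\dots,\tfrac{n}{2r}$ forced by (3.2), under which each block and its mirror block differ by the order-two element $y^{n/(2r)}$. The only difference is presentational --- the paper writes out $M(G)$ as an explicit block matrix and reads the reversibility of the block pattern off by inspection, whereas your congruence $\kappa(\tfrac{n}{r}-j+1)-\kappa(j)\equiv\tfrac{n}{2r}\pmod{\tfrac{n}{r}}$ supplies the verification the paper leaves implicit.
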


\begin{proof}
Note that $G = \langle x, y~|~ x^{r}= y^{\frac{n}{r}} = 1, xy = yx \rangle$ and
$$v =\sum\limits_{i=0}^{r-1}[\alpha_{g_{i+1}}x^{i}+\alpha_{g_{i+(\frac{n}{r}-1)r+1}}x^{i}y^{\frac{n}{2r}}]
+\sum\limits_{k=1}^{\frac{n}{2r}-1}\sum\limits_{j=0}^{r-1}[\alpha_{g_{j+(2k-1)r+1}}x^{j}y^{k}
+\alpha_{g_{j+2kr+1}}x^{j}y^{\frac{n}{r}-k}].$$
We have
\begin{eqnarray*}
\begin{aligned}
G=\{g_{1},g_{2},\dots,g_{n}\}=\Big \{&1,x,\dots,x^{r-1},\quad y,xy,\dots,x^{r-1}y,\\
 &y^{\frac{n}{r}-1},xy^{\frac{n}{r}-1},\dots,x^{r-1}y^{\frac{n}{r}-1},\quad \dots,\quad y^{\frac{n}{2r}-1},xy^{\frac{n}{2r}-1},\dots,x^{r-1}y^{\frac{n}{2r}-1},\\
 &y^{\frac{n}{2r}+1},xy^{\frac{n}{2r}+1},\dots,x^{r-1}y^{\frac{n}{2r}+1},\quad y^{\frac{n}{2r}},xy^{\frac{n}{2r}},\dots,x^{r-1}y^{\frac{n}{2r}} \Big\}.
 \end{aligned}
\end{eqnarray*}
Thus
\begin{eqnarray*}
\begin{aligned}
\{g_{1}^{-1},g_{2}^{-1},\dots,g_{n}^{-1}\}=\Big\{&1,x^{r-1},\dots,x,\quad y^{\frac{n}{r}-1},x^{r-1}y^{\frac{n}{r}-1},\dots,xy^{\frac{n}{r}-1},\\
&y,x^{r-1}y,\dots,xy,\quad \dots, \quad y^{\frac{n}{2r}+1},x^{r-1}y^{\frac{n}{2r}+1},\dots,xy^{\frac{n}{2r}+1},\\
&y^{\frac{n}{2r}-1},x^{r-1}y^{\frac{n}{2r}-1},\dots,xy^{\frac{n}{2r}-1},\quad y^{\frac{n}{2r}},x^{r-1}y^{\frac{n}{2r}},\dots,xy^{\frac{n}{2r}} \Big\}.
\end{aligned}
\end{eqnarray*}
By Equation \eqref{eq.2.2}, we have
$$
M(G)=\begin{pmatrix}
M_{0}&M_{1}&M_{\frac{n}{r}-1}&\dots  &M_{\frac{n}{2r}-1}&M_{\frac{n}{2r}+1}&M_{\frac{n}{2r}}\\
M_{\frac{n}{r}-1} &M_{0}&M_{\frac{n}{r}-2}&\dots&M_{\frac{n}{2r}-2}&M_{\frac{n}{2r}}&M_{\frac{n}{2r}-1}\\
M_{1}&M_{2}&M_{0}&\dots&M_{\frac{n}{2r}}&M_{\frac{n}{2r}+2}&M_{\frac{n}{2r}+1}\\
\vdots   &\vdots &\vdots & &\vdots&\vdots&\vdots\\
M_{\frac{n}{2r}+1}&M_{\frac{n}{2r}+2} &M_{\frac{n}{2r}} &\dots&M_{0}&M_{2}&M_{1}\\
M_{\frac{n}{2r}-1}&M_{\frac{n}{2r}} &M_{\frac{n}{2r}-2} &\dots&M_{\frac{n}{r}-2}&M_{0}&M_{\frac{n}{r}-1}\\
M_{\frac{n}{2r}}&M_{\frac{n}{2r}+1} &M_{\frac{n}{2r}-1} &\dots&M_{\frac{n}{r}-1}&M_{1}&M_{0}\\
\end{pmatrix},
$$
where
$$
M_{i}=\begin{pmatrix}
y^{i}&xy^{i}&x^2y^{i}&\dots&x^{r-2}y^{i}&x^{r-1}y^{i}\\
x^{r-1}y^{i}&y^{i}&xy^{i}&\dots&x^{r-3}y^{i}&x^{r-2}y^{i}\\
x^{r-2}y^{i}&x^{r-1}y^{i}&y^{i}&\dots&x^{r-4}y^{i}&x^{r-3}y^{i}\\
\vdots&\vdots&\vdots&&\vdots&\vdots\\
x^2y^{i}&x^3y^{i}&x^4y^{i}&\dots&y^{i}&xy^{i}\\
xy^{i}&x^2y^{i}&x^3y^{i}&\dots&x^{r-1}y^{i}&y^{i}\\
\end{pmatrix}_{r \times r},
$$
then $M(G)$ is a block reversible matrix.\par
By Equation \eqref{eq.2.3}, we have

$$
\sigma(v)=\begin{pmatrix}
A_{0}&A_{1}&A_{\frac{n}{r}-1}&\dots  &A_{\frac{n}{2r}-1}&A_{\frac{n}{2r}+1}&A_{\frac{n}{2r}}\\
A_{\frac{n}{r}-1} &A_{0}&A_{\frac{n}{r}-2}&\dots&A_{\frac{n}{2r}-2}&A_{\frac{n}{2r}}&A_{\frac{n}{2r}-1}\\
A_{1}&A_{2}&A_{0}&\dots&A_{\frac{n}{2r}}&A_{\frac{n}{2r}+2}&A_{\frac{n}{2r}+1}\\
\vdots   &\vdots &\vdots & &\vdots&\vdots&\vdots\\
A_{\frac{n}{2r}+1}&A_{\frac{n}{2r}+2} &A_{\frac{n}{2r}} &\dots&A_{0}&A_{2}&A_{1}\\
A_{\frac{n}{2r}-1}&A_{\frac{n}{2r}} &A_{\frac{n}{2r}-2} &\dots&A_{\frac{n}{r}-2}&A_{0}&A_{\frac{n}{r}-1}\\
A_{\frac{n}{2r}}&A_{\frac{n}{2r}+1} &A_{\frac{n}{2r}-1} &\dots&A_{\frac{n}{r}-1}&A_{1}&A_{0}\\
\end{pmatrix}_{n\times n}
,$$
is also a block reversible matrix.
\end{proof}

In the following, by suitable choices of  group $G$, we can obtain that  $\sigma(v)$ is a block reversible matrix. we construct $\Omega^{\ast}(v)$ by fixing each block in $\sigma(v)$ to be of the form:
$$
A_{l}^{'}=\begin{pmatrix}
\alpha_{g_{j}^{-1}g_{k}}&\alpha_{g_{j}^{-1}g_{k+1}}&\dots&\alpha_{g_{j}^{-1}g_{k+(r-1)}}\\
\alpha_{\phi_{i,l}(h_{i2}^{-1}h_{i1})}&\alpha_{\phi_{i,l}(h_{i2}^{-1}h_{i2})}&\dots&\alpha_{\phi_{i,l}(h_{i2}^{-1}h_{ir})}\\
\vdots   &\vdots &      &\vdots\\
\alpha_{\phi_{i,l}(h_{ir}^{-1}h_{i1})}&\alpha_{\phi_{i,l}(h_{ir}^{-1}h_{i2})}&\dots&\alpha_{\phi_{i,l}(h_{ir}^{-1}h_{ir})}\\
\end{pmatrix}_{r \times r},
$$
where the elements of the groups $H_{i}$ are listed as form Equation \eqref{eq.3.1} and for each same block corresponding to the reverse rows in the block matrix $\sigma(v)$, we use the same group $H_{i}, i \in \{1,2,\dots,s\}$ to construct $A'_l$.\par

\begin{Theorem}\label{th.3.3}
Let $\Omega^{\ast}(v)$ be the composite matrix defined above and let
\begin{eqnarray}\label{eq.3.3}
\mathcal{D}^{\ast}(v) = \langle \Omega^{\ast}(v)\rangle
\end{eqnarray}
be a composite $G$-code code generated by $\Omega^{\ast}(v)$ over $R$. Then any linear composite $G$-code in $R^{n}$ generated by~\eqref{eq.3.3} is a reversible code of index $1$.
\end{Theorem}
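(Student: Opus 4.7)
The plan is to show that row-reversal permutes the rows of $\Omega^{\ast}(v)$; since reversal is a linear operation on vectors, this immediately implies the row span $\mathcal{D}^{\ast}(v)$ is closed under reversal, i.e., reversible of index $1$. The argument combines two ingredients: the block-level reversibility supplied by Theorem~\ref{th.3.2} and an intra-block reversibility inherited from the listing~\eqref{eq.3.1} imposed on each auxiliary group $H_{i}$.

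On the block level, Theorem~\ref{th.3.2} asserts that $\sigma(v)$ is block-reversible: the block in position $(I,J)$ coincides with the block in position $(n/r+1-I,\,n/r+1-J)$. Since $\Omega^{\ast}(v)$ is obtained from $\sigma(v)$ by replacing each $A_{l}$ with $A_{l}'$, and by construction the auxiliary group $H_{i}$ depends only on the block itself (equivalently, the same $H_{i}$ is assigned to block positions related by the block-reversal symmetry), the identity $A_{l(I,J)}'=A_{l(n/r+1-I,\,n/r+1-J)}'$ is inherited by $\Omega^{\ast}(v)$. On the intra-block level, each $A_{l}'$ can be viewed as an $H_{i}$-matrix of the form~\eqref{eq.2.3} (after relabelling via the bijection $\phi_{i,l}$), with the elements of $H_{i}$ listed as in~\eqref{eq.3.1}. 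A direct computation using $h_{is}\,h_{i,r+1-s}^{-1}=\beta_{i}$ for every $s$ shows that reversing the $t$-th row of $A_{l}'$ yields the $(r+1-t)$-th row of the same $A_{l}'$; crucially, the intra-block involution $t\mapsto r+1-t$ is the same for every choice of $H_{i}$.

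Writing $i=(I-1)r+t$ with $1\le t\le r$, the $i$-th row of $\Omega^{\ast}(v)$ is the concatenation of the $t$-th rows of the blocks $A_{l(I,1)}',A_{l(I,2)}',\dots,A_{l(I,n/r)}'$. Reversing this row reverses both the order of the blocks and the entries within each block-row. Applying the intra-block reversibility and then the block-reversibility turns the result into the concatenation of the $(r+1-t)$-th rows of the blocks $A_{l(n/r+1-I,\,1)}',A_{l(n/r+1-I,\,2)}',\dots,A_{l(n/r+1-I,\,n/r)}'$, which is precisely the $(n+1-i)$-th row of $\Omega^{\ast}(v)$. Therefore row-reversal sends row $i$ to row $n+1-i$, so $\Omega^{\ast}(v)$ is a reversible matrix and $\mathcal{D}^{\ast}(v)$ is a reversible code of index $1$.

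The main obstacle is index bookkeeping: one has to keep the two index layers (block position $(I,J)$ versus intra-block row $t$) strictly separate, confirm that the listing~\eqref{eq.3.1} produces exactly the involution $t\mapsto r+1-t$ independently of the particular $H_{i}$, and then verify that this intra-block involution composes correctly with the block-level involution from Theorem~\ref{th.3.2} to yield the global row-reversal $i\mapsto n+1-i$. Once this bookkeeping is in place, the algebra reduces to the identity $h_{is}h_{i,r+1-s}^{-1}=\beta_{i}$ together with the block-reversibility already proved.
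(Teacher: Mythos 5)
Your proof is correct and takes essentially the same route as the paper's: the two ingredients you isolate --- that reversal sends row $t$ of $A_{l}^{'}$ to row $r+1-t$ of the same block independently of the choice of $H_{i}$ (the content of Lemmas~\ref{lemma.3.1} and~\ref{lemma.3.2}), and the block-reversibility of $\sigma(v)$ from Theorem~\ref{th.3.2} together with the convention that symmetric block positions are converted with the same $H_{i}$ --- are exactly the paper's. The only difference is presentational: you check closure under reversal on the generator rows themselves (showing row $i$ reverses to row $n+1-i$) and invoke linearity, whereas the paper writes an arbitrary codeword as $(y_{1},\dots,y_{\frac{n}{r}})\Omega^{\ast}(v)$ and exhibits its reverse as $(y_{1^{'}}P,\dots,y_{(\frac{n}{r})^{'}}P)\Omega^{\ast}(v)$ via the permutation matrix $P$.
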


\begin{Remark}\label{r.3.1}
Theorem \ref{th.3.3} is an improvement of the main conclusion in reference \cite{b19}. During our research, we observed that if the finite group $H$ used for transforming each block of $\sigma(v)$ does not follow certain rules, the DNA code generated by the method given in \cite{b19} may not be reversible. Therefore, we changed some conditions of this method and obtained a general construction.
\end{Remark}

Now we will provide a counter example we obtained.\par
\begin{Example}
Assume $G= \langle x,y~|~x^6=y^2=1,xy=yx\rangle \cong C_2\times C_6$, then $G=\{g_1, g_2, \dots , g_{12}\}$, where $g_i = x^{i-1}$ for $i = 1$ to $i = 6$ and $g_{i+6} = x^{i-1}y$ for $i = 1$ to $i = 6$. Let $R=\mathbb{F}_4$, $v=\sum_{i=0}^{5}\alpha_{g_{i+1}}x^{i}+\alpha_{g_{i+7}}x^{i}y \in R(C_2\times C_6)$. Then
$$\sigma(v)=\begin{pmatrix}
A&B\\
B&A
\end{pmatrix},$$
where $A=cir(\alpha_{g_1},\alpha_{g_2},\dots,\alpha_{g_6})$, $B=cir(\alpha_{g_7},\alpha_{g_8},\dots,\alpha_{g_{12}})$. \par
Let $H_1=\langle a,b~|~a^3=b^2=1,a^b=a^{-1}\rangle$ be a dihedral group of order $6$, $H_2=\langle c~|~c^6=1\rangle$ be a cyclic group of order $6$. According to the construction method in reference~\cite{b19}, we can use different groups $H_i$ to construct $A_l^{'}$($1\leq l \leq4$ ) for the four blocks of $\sigma(v)$.
 $$
\sigma(v)=\begin{pmatrix} 	
A&B\\
B&A
\end{pmatrix}
\longrightarrow
\Omega^{\ast}(v)=\begin{pmatrix}
A_{1}^{'}&A_{2}^{'}\\
 A_{3}^{'}&A_{4}^{'}
 \end{pmatrix}.
$$
therefore, we use group $H_1$ to construct $A_1^{'}$, $A_3^{'}$ and group $H_2$ to construct $A_2^{'}$, $A_4^{'}$. Then we obtained the following matrix:

$$\Omega^{\ast}(v)
=\addtocounter{MaxMatrixCols}{10}
\left(
\begin{array}{cccccc:cccccc}
1&0&0& \omega&0&0&   \omega&\omega^2&\omega& 1&\omega&\omega^2\\
0&1&0& 0&\omega&0&   \omega&\omega&\omega^2& \omega&\omega^2&1\\
0&0&1& 0&0&\omega&   \omega^2&\omega&\omega& \omega^2&1&\omega\\
\omega&0&0& 1&0&0&   \omega&1&\omega^2& \omega&\omega&\omega^2\\
0&\omega&0& 0&1&0&   1&\omega^2&\omega& \omega^2&\omega&\omega\\
0&0&\omega& 0&0&1&   \omega^2&\omega&1& \omega&\omega^2&\omega\\
\hdashline
\omega&\omega^2&\omega& 1&\omega&\omega^2&   1&0&0& \omega&0&0\\
\omega&\omega&\omega^2& \omega^2&1&\omega&   0&1&0& 0&0&\omega\\
\omega^2&\omega&\omega& \omega&\omega^2&1&   0&0&1& 0&\omega&0\\
1&\omega^2&\omega& \omega&\omega&\omega^2&   0&\omega&0& 1&0&0\\
\omega&1&\omega^2& \omega^2&\omega&\omega&   \omega&0&0& 0&1&0\\
\omega^2&\omega&1& \omega&\omega^2&\omega&   0&0&\omega& 0&0&1
\end{array}
\right).
$$

Let $\mathcal{C}=\langle \Omega^{\ast}(v) \rangle$. It is easy to check that codeword ${\bf c}= ( 0,0,0,0,0,0,0,0,0,\omega,0,\omega)\\ \in \mathcal{C}$, but ${\bf c^{r}}= ( \omega,0,\omega,0,0,0,0,0,0,0,0,0) \notin \mathcal{C}$. Therefore it is not a reversible composite $G$-code. The DNA code obtained using this composite $G$-code is also not reversible.
\end{Example}

To prove Theorem \ref{th.3.3}, we first give the following two lemmas:
\begin{Lemma}\label{lemma.3.1}
(\!\!\cite{b19}) Let $r$ be an even integer and let $A_{l}^{'}$ be the $r\times r$ matrix with the elements of the group $H_{i}$ are listed as form Equation \eqref{eq.3.1}. Then the reverse of each row of $A_{l}^{'}$ is in $A_{l}^{'}$.
\end{Lemma}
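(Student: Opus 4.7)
The plan is to verify that every row of $A_l^{'}$---including the first---has the uniform form $(\alpha_{\phi_{i,l}(h_{ik}^{-1}h_{it})})_{t=1}^{r}$, and then to exhibit the reverse of row $k$ as another row of this form. The first row is precisely the $k=1$ instance because $h_{i1}=e$ and $\phi_{i,l}(h_{it})=g_j^{-1}g_{k+t-1}$ by construction, so a single description covers all $r$ rows.

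With this uniform description in hand, the reverse of row $k$ has $t$-th entry $\alpha_{\phi_{i,l}(h_{ik}^{-1}h_{i(r-t+1)})}$, which I want to match with the $t$-th entry $\alpha_{\phi_{i,l}(h_{ik'}^{-1}h_{it})}$ of some row $k'$. Since $\phi_{i,l}$ is a bijection, this reduces to the system of group-element identities
\begin{equation*}
h_{ik'}h_{ik}^{-1}=h_{it}h_{i(r-t+1)}^{-1},\qquad t=1,2,\dots,r.
\end{equation*}
Hence the existence of a witnessing index $k'$ amounts to showing that the right-hand side is independent of $t$, and then reading off which element of $H_i$ it is.

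The core computation uses the ordering in \eqref{eq.3.1}. Write $r=2m$, with $T=\{t_0=e,t_1,\dots,t_{m-1}\}$ of index $2$ in $H_i$ and $\beta\in H_i\setminus T$ of order $2$; the listing then gives $h_{it}=t_{t-1}$ for $1\le t\le m$ and $h_{it}=\beta t_{r-t}$ for $m+1\le t\le r$. The involution $t\mapsto r-t+1$ swaps these two half-ranges, and in either half-range the quantity $h_{it}h_{i(r-t+1)}^{-1}$ pairs each $t_s$ with its own inverse: for $t\le m$ one gets $t_{t-1}(\beta t_{t-1})^{-1}=t_{t-1}t_{t-1}^{-1}\beta^{-1}=\beta$, and for $t\ge m+1$ one gets $\beta t_{r-t}\,t_{r-t}^{-1}=\beta$. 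Thus $h_{it}h_{i(r-t+1)}^{-1}=\beta$ uniformly in $t$.

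Consequently $h_{ik'}=\beta h_{ik}$, which is again an element of $H_i$ and therefore corresponds to a unique index $k'\in\{1,\dots,r\}$; this row $k'$ is the reverse of row $k$, proving the lemma. The only real obstacle is the bookkeeping that verifies the two half-ranges $[1,m]$ and $[m+1,r]$ are paired correctly under $t\mapsto r-t+1$ so that the on-the-nose cancellation $t_s t_s^{-1}=e$ applies uniformly; no commutativity of $T$ or normality argument is required.
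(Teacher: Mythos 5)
Your proof is correct and is essentially the paper's own argument (which, since Lemma \ref{lemma.3.1} is quoted from \cite{b19}, appears embedded in the second half of the proof of Lemma \ref{lemma.3.2}): both hinge on $\beta^{-1}=\beta$ and the palindromic listing \eqref{eq.3.1}, the paper checking directly that the reverse of row $m+1$ is row $r-m$ via $(\beta t_m)^{-1}=t_m^{-1}\beta$, which is exactly your uniform identity $h_{it}h_{i(r-t+1)}^{-1}=\beta$ stating that left multiplication by $\beta$ reverses the listing. Your packaging via the constant $h_{ik'}=\beta h_{ik}$ is a clean, slightly more uniform way to treat all rows at once, but it is the same computation.
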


Based on the property given in Lemma~\ref{lemma.3.1}, we have the following result:
\begin{Lemma}\label{lemma.3.2}
Let $r$ be an even integer, $A_{l}^{'}$ be the $r\times r$ matrix with the elements of the group $H_{i}$ being listed as in Equation \eqref{eq.3.1} and $\mathbf{a}$ be a vector of length $r$. Then $(\mathbf{a}A_{l}^{'})^{r}=\mathbf{a}PA_{l}^{'}$, where $P$ is a permutation matrix related to $A_{l}^{'}$, and for all finite group $H_{i}$ with the form as Equation \eqref{eq.3.1}, the permutation matrix $P$ are the same.
\end{Lemma}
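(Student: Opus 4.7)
The plan is to leverage Lemma \ref{lemma.3.1} directly. Since $r$ is even and the listing (3.1) guarantees that the reverse of every row of $A_l'$ is itself a row of $A_l'$, there is a well-defined permutation $\pi$ on $\{1,2,\ldots,r\}$ such that the reverse of row $t$ of $A_l'$ equals row $\pi(t)$ of $A_l'$. Using the linearity of the reverse operation on vectors, I would write
\[
(\mathbf{a} A_l')^{r} \;=\; \sum_{t=1}^{r} a_t\,(\text{row}_t\, A_l')^{r} \;=\; \sum_{t=1}^{r} a_t\,\text{row}_{\pi(t)}\, A_l' \;=\; \mathbf{a} P A_l',
\]
where $P$ is the $r\times r$ permutation matrix with $P_{t,j}=1$ iff $j=\pi(t)$. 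This yields the claimed factorisation at once, so the bulk of the work is locating $\pi$ explicitly and checking its independence from $H_i$.

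The more subtle part is showing that $P$ is the same matrix for every admissible group $H_i$. To pin $\pi(t)$ down, I would equate the reverse of row $t$ with row $t'$ entrywise using the symbolic form $\alpha_{\phi_{i,l}(h_{it}^{-1}h_{is})}$ and appeal to injectivity of $\phi_{i,l}$ to reduce the equation to $h_{it'}^{-1}h_{is}=h_{it}^{-1}h_{i(r-s+1)}$ for all $s$, i.e.\ $h_{it'}=h_{is}h_{i(r-s+1)}^{-1}h_{it}$. The heart of the argument is that the product $h_{is}h_{i(r-s+1)}^{-1}$ is both independent of $s$ and determined entirely by the ordering rule (3.1) rather than by the particular group $H_i$. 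A short case check in the listing $\{e,t_1,\ldots,t_{l-1},\beta t_{l-1},\ldots,\beta t_1,\beta\}$, using only $\beta^{2}=e$, gives $h_{is}h_{i(r-s+1)}^{-1}=\beta$ for every $s$: for $s\le l$ one has $t_{s-1}(\beta t_{s-1})^{-1}=t_{s-1}t_{s-1}^{-1}\beta=\beta$, and the range $s>l$ is symmetric. Hence $h_{it'}=\beta h_{it}$, and matching $\beta h_{it}$ against the listing yields $t'=r-t+1$.

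Consequently, $P$ is forced to be the anti-diagonal exchange matrix defined by $P_{t,j}=1$ iff $t+j=r+1$, which plainly depends only on $r$ and not on any further data of $H_i$. The main obstacle is exactly the computation in the previous paragraph: one must verify that $h_{is}h_{i(r-s+1)}^{-1}$ collapses to the single involution $\beta$ for every $s$ without invoking any structural hypothesis on $H_i$ beyond what the listing (3.1) already imposes (an index-$2$ subgroup $T$ and an involution $\beta\notin T$), taking care that non-commutativity of $H_i$ does not obstruct the cancellations. Once this universal value of $\pi$ is secured, the identity $(\mathbf{a}A_l')^{r}=\mathbf{a}PA_l'$ is immediate from the linearity step, and the uniformity of $P$ across all admissible $H_i$ follows.
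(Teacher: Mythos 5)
Your argument is correct and takes essentially the same route as the paper: the factorisation $(\mathbf{a}A_{l}^{'})^{r}=\mathbf{a}PA_{l}^{'}$ follows from linearity of reversal together with Lemma~\ref{lemma.3.1}, and the uniformity of $P$ follows from an explicit computation in the listing \eqref{eq.3.1} showing that the reverse of row $t$ is row $r-t+1$. The paper's proof reaches the same conclusion by writing out the $(m+1)$-th and $(r-m)$-th rows and cancelling with $\beta^{-1}=\beta$, which is exactly your identity $h_{is}h_{i(r-s+1)}^{-1}=\beta$ in expanded form, so your version is just a slightly more compact packaging of the same computation.
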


\begin{proof}
By Lemma~\ref{lemma.3.1}, we know that the reverse of each row of $A_{l}^{'}$ is in $A_{l}^{'}$. We suppose that in $A_{l}^{'}$ the reverse of the $i_{k}$th row is the $j_{k}$th row, where $k \in \{1,2,\dots,\frac{r}{2}\}$. Let $\mathbf{a}=(a_{1},\dots,a_{r})$, $P=E(i_{1},j_{1})\dots E(i_{\frac{r}{2}},j_{\frac{r}{2}})$, $ A_{l}^{'}=\begin{pmatrix} \alpha_{1}\\ \vdots \\ \alpha_{r} \end{pmatrix}$, where $E(i,j)$ is an elementary commutative matrix and $\alpha_{i}~(i\in\{1,\dots,r\})$ is the row vector of $A_{l}^{'}$, then\\
$$(\eta_{1},\dots,\eta_{r})=\eta=\mathbf{a}A_{l}^{'}=(a_{1},\dots,a_{r})\begin{pmatrix} \alpha_{1}\\ \vdots \\ \alpha_{r} \end{pmatrix}=a_{1}\alpha_{1}+\dots +a_{r}\alpha_{r}.$$

For any $t\in\{1,2,\dots,r\}$, we have

\begin{equation*}
\begin{split}
\eta_{t}&=a_{1}\alpha_{1t}+\dots +a_{r}\alpha_{rt}\\
&=a_{i_{1}}\alpha_{i_{1},t}+a_{j_{1}}\alpha_{j_{1},t}+\dots+a_{i_{\frac{r}{2}}}\alpha_{i_{\frac{r}{2}},t}+a_{j_{\frac{r}{2}}}\alpha_{j_{\frac{r}{2}},t}.
\end{split}
\end{equation*}

Let
$$(\beta_{1},\dots,\beta_{r})=\beta=\mathbf{a}PA_{l}^{'}=a_{i_{1}}\alpha_{j_{1}}+a_{j_{1}}\alpha_{i_{1}}+\dots+
a_{i_{\frac{r}{2}}}\alpha_{j_{\frac{r}{2}}}+a_{j_{\frac{r}{2}}}\alpha_{i_{\frac{r}{2}}},$$
thus
$$\beta_{t}=a_{i_{1}}\alpha_{j_{1},t}+a_{j_{1}}\alpha_{i_{1},t}+\dots +a_{i_{\frac{r}{2}}}\alpha_{j_{\frac{r}{2}},t}+a_{j_{\frac{r}{2}}}\alpha_{i_{\frac{r}{2}},t},$$
therefore
$$\beta_{r-t+1}=a_{i_{1}}\alpha_{j_{1},r-t+1}+a_{j_{1}}\alpha_{i_{1},r-t+1}+\dots +a_{i_{\frac{r}{2}}}\alpha_{j_{\frac{r}{2}},r-t+1}+a_{j_{\frac{r}{2}}}\alpha_{i_{\frac{r}{2}},r-t+1}.$$

Since the reverse of the $i_{k}$th row is the $j_{k}$th row ($k \in \{1,2,\dots,\frac{r}{2}\}$), we have $\alpha_{i_{k},t}=\alpha_{j_{k},r-t+1}$, then
\begin{equation*}
\beta_{r-t+1}=a_{i_{1}}\alpha_{i_{1},t}+a_{j_{1}}\alpha_{j_{1},t}+\dots +a_{i_{\frac{r}{2}}}\alpha_{i_{\frac{r}{2}},t}+a_{j_{\frac{r}{2}}}\alpha_{j_{\frac{r}{2}},t}=\eta_{t},
\end{equation*}
i.e.:  $\beta=\eta^{r}$. Therefore we have $(\mathbf{a}A_{l}^{'})^{r}=\mathbf{a}PA_{l}^{'}$. \par
Since the elements of the group $H_{i}$ are listed as form Equation \eqref{eq.3.1}, let $r=2s$, we have
$$
\{h_{i1},h_{i2}, \dots,h_{ir}\}=\{e,t_{1},\dots,t_{s-1},\beta t_{s-1},\dots,\beta t_{1},\beta \},
$$
then
$$
\{h_{i1}^{-1},h_{i2}^{-1},\dots,h_{ir}^{-1}\}=\{e,t_{1}^{-1},\dots,t_{s-1}^{-1},(\beta t_{s-1})^{-1},\dots,(\beta t_{1})^{-1},\beta\}.
$$

We can easily calculate that in $A_{l}^{'}$ the reverse of the first row is the last row. Since $\phi_{i,l}$ is a bijection and from the structure of $A_{l}^{'}$, we know that we only need to prove that before mapping $\phi_{i,l}$, for any $u \in\{2,\dots,r-1\}$, the reverse of the elements $\{h_{iu}^{-1}h_{iv}\}$ in $u$ row is the elements of a particular row, where $ v \in\{1,\dots,r\}$. For $1 \leq m \leq s-1$, the elements $\{h_{i,m+1}^{-1}h_{iv}\}$ in $(m+1)$-th row are as follows:
$$t_{m}^{-1},~t_{m}^{-1}t_{1},\dots,~t_{m}^{-1}t_{s-1},~t_{m}^{-1}\beta t_{s-1},\dots,~t_{m}^{-1}\beta t_{1},~t_{m}^{-1}\beta.$$
The elements $\{h_{i,r-m}^{-1}h_{iv}\}$ in $(r-m)$-th row are:
$$(\beta t_{m})^{-1},(\beta t_{m})^{-1}t_{1},\dots,(\beta t_{m})^{-1}t_{s-1},(\beta t_{m})^{-1}\beta t_{s-1},\dots, (\beta t_{m})^{-1}\beta t_{1},(\beta t_{m})^{-1}\beta .$$
Since $\beta^{-1}=\beta$, so we can simplify it to the following:
$$t_{m}^{-1}\beta,~t_{m}^{-1}\beta t_{1},\dots,~t_{m}^{-1}\beta t_{s-1},~t_{m}^{-1}t_{s-1},\dots,~t_{m}^{-1}t_{1},t_{m}^{-1},$$
which is the reverse of the elements in $(m+1)$-th row. Therefore we can obtain that the reverse of the $m$th row is the $(r-m+1)$th row, where $1\leq m \leq l$. Then for all finite group $H_{i}$ with the form as Equation \eqref{eq.3.1}, the permutation matrix $P$ are the same.
\end{proof}

Now we prove Theorem \ref{th.3.3}:

\begin{proof}[Proof of Theorem~\ref{th.3.3}]
Let $\mathcal{D}^{\ast}(v) =\langle \Omega^{\ast}(v)\rangle $ be a linear composite $G$-code as defined in \eqref{eq.3.3}. Since the partitioned matrix $\sigma(v)$ is reversible block matrix, then the matrix $\Omega^{\ast}(v)$ that obtained using our method is still a reversible block matrix. For each code ${\bf c} \in \mathcal{D}^{\ast}(v)$, there is a partitioned vector $(y_{1},y_{2},\dots,y_{\frac{n}{r}}) \in R^{n}$, where $y_{i}\in R^{r}, i\in\{1,\dots,\frac{n}{r}\}$, s.t.,
\begin{gather*}
\begin{split}
{\bf c}&=(y_{1},y_{2},\dots,y_{\frac{n}{r}})\begin{pmatrix}
A_{1}^{'}&A_{2}^{'}&\dots  &A_{\frac{n}{r}}^{'}\\
A_{\frac{n}{r}+1}^{'} &A_{\frac{n}{r}+2}^{'}&\dots&A_{\frac{2n}{r}}^{'}\\
\vdots   &\vdots  & &\vdots\\
A_{(\frac{n}{r}-1)\frac{n}{r}+1}^{'} &A_{(\frac{n}{r}-1)\frac{n}{r}+2}^{'}&\dots  &A_{\frac{n^{2}}{r^{2}}}^{'}
\end{pmatrix} \\
&=(\sum_{i=1}^{\frac{n}{r}}y_{i}A_{(i-1)\frac{n}{r}+1}^{'},~\sum_{i=1}^{\frac{n}{r}}y_{i}A_{(i-1)\frac{n}{r}+2}^{'},~\dots~,
~\sum_{i=1}^{\frac{n}{r}}y_{i}A_{\frac{in}{r}}^{'}).
\end{split}
\end{gather*}

Thus
\begin{gather*}
\begin{split}
{\bf c}^{r}&=(\sum_{i=1}^{\frac{n}{r}}y_{i}A_{(i-1)\frac{n}{r}+1}^{'},~\sum_{i=1}^{\frac{n}{r}}y_{i}A_{(i-1)\frac{n}{r}+2}^{'},~\dots~,
~\sum_{i=1}^{\frac{n}{r}}y_{i}A_{\frac{in}{r}}^{'})^{r}\\
&=((\sum_{i=1}^{\frac{n}{r}}y_{i}A_{\frac{in}{r}}^{'})^{r},~(\sum_{i=1}^{\frac{n}{r}}y_{\frac{n}{r}}A_{\frac{in}{r}-1}^{'})^{r},~\dots~,
~(\sum_{i=1}^{\frac{n}{r}}y_{i}A_{(i-1)\frac{n}{r}+1}^{'})^{r}).
\end{split}
\end{gather*}

By Lemma \ref{lemma.3.2}, we have $(y_{i}A_{l}^{'})^{r}=y_{i}PA_{l}^{'}$. Assume that in the block matrix $\Omega^{\ast}(v)$, the reverse of the $i$th row is the $i^{'}$th row, where $i,i^{'} \in \{1,2,\dots,\frac{n}{r}\}$ and $i\neq i^{'}$. Therefore

\begin{gather*}
\begin{split}
{\bf c}^{r}&=(\sum_{i=1}^{\frac{n}{r}}y_{i}PA_{\frac{in}{r}}^{'},~\sum_{i=1}^{\frac{n}{r}}y_{i}PA_{\frac{in}{r}-1}^{'},~\dots~,
~\sum_{i=1}^{\frac{n}{r}}y_{i}PA_{(i-1)\frac{n}{r}+1}^{'})\\
&=(y_{1^{'}}P,y_{2^{'}}P,\dots,y_{(\frac{n}{r})^{'}}P)\begin{pmatrix}
A_{1}^{'}&A_{2}^{'}&\dots  &A_{\frac{n}{r}}^{'}\\
A_{\frac{n}{r}+1}^{'} &A_{\frac{n}{r}+2}^{'}&\dots&A_{\frac{2n}{r}}^{'}\\
\vdots   &\vdots& &\vdots\\
A_{(\frac{n}{r}-1)\frac{n}{r}+1}^{'} &A_{(\frac{n}{r}-1)\frac{n}{r}+2}^{'}&\dots  &A_{\frac{n^{2}}{r^{2}}}^{'}
\end{pmatrix}.
\end{split}
\end{gather*}
So ${\bf c}^{r} \in \mathcal{D}^{\ast}(v)$, i.e., the code $\mathcal{D}^{\ast}(v)$ in $R^{n}$ is a reversible code of index 1.
\end{proof}

\section{Reversible Composite Matrices over Ring $R$}
In this section, we employ the group $G$ of order $n$ from Theorem \ref{th.3.2}, and the construction method in Section $3$. For the cases of $\frac{n}{r}=2$ and $\frac{n}{r}=4$, we use three special forms of groups $H_{i}, i\in\{1,2,3\}$ to construct some reversible composite matrices in a finite ring $R$, which we then use to generate reversible DNA codes.\par
\subsection{The Case of $\frac{n}{r}=2$}
Let $G = \langle x, y~|~ x^{\frac{n}{2}} = y^2 = 1, xy = yx \rangle= C_{2} \times C_{\frac{n}{2}}$,  where $n$ is a positive integer that is divisible by $16$, let
\begin{center}
$v=\sum\limits_{i=0}^{\frac{n}{2}-1} [ \alpha_{g_{i+1}}x^{i}+\alpha_{g_{i+(\frac{n}{2}+1)}}x^{i}y] \in R(C_{2}\times C_{\frac{n}{2}}).$
\end{center}
By Theorem~\ref{th.3.2}, the partitioned matrix
$$\sigma(v)=\begin{pmatrix}	
A&B\\
B&A
\end{pmatrix}
$$
is a reversible block matrix, where $A = circ(\alpha_{g_{1}},\dots,\alpha_{g_{\frac{n}{2}}})$ and $B =circ(\alpha_{g_{\frac{n}{2}+1}},\dots,\alpha_{g_{n}})$.

We will now present the three forms of composite matrices that we obtained.\par

{\bf (I). The reversible composite matrix generated from $H_1$ and $H_2$}\par
Let $H_{1} =\langle a,b~|~a^{\frac{n}{4}}=b^{2}=1,a^{b}=a^{-1} \rangle$ be the dihedral group of order $\frac{n}{2}$ with its elements being listed as follows (in accordance to Equation \eqref{eq.3.1}):
\begin{center}
$\{ h_{11},h_{12},\dots,h_{1\frac{n}{2}}\} =\{ 1,a,a^2,\dots,a^{\frac{n}{4}-1},ba^{\frac{n}{4}-1},ba^{\frac{n}{4}-2},\dots,ba,b \},$
\end{center}
and let $H_{2} =\langle c,d~|~c^{\frac{n}{4}}=d^{2}=1,c^{d}=c^{\frac{n}{8}-1} \rangle$ be the quasi-dihedral group of order $\frac{n}{2}$ with its elements being listed as follows (in accordance to Equation \eqref{eq.3.1}:
\begin{center}
$\{ h_{21},h_{22},\dots,h_{2\frac{n}{2}}\}
=\{ 1,c,c^2,\dots,c^{\frac{n}{4}-1},dc^{\frac{n}{4}-1},dc^{\frac{n}{4}-2},\dots,dc,d \}.$
\end{center}

Next, using the method given in Section $3$, we obtain the matrix $\Omega^{\ast}(v)$ in the following form:
$$
\sigma(v)=\begin{pmatrix} 	
A&B\\
B&A
\end{pmatrix}
\longrightarrow
\Omega^{\ast}(v)=\begin{pmatrix}
A_{1}^{'}&A_{2}^{'}\\
 A_{3}^{'}&A_{4}^{'}
 \end{pmatrix},
$$
where
$$
A_{1}^{'}=\begin{pmatrix}
\alpha_{g_{1}^{-1}g_{1}}&\alpha_{g_{1}^{-1}g_{2}}&\dots&\alpha_{g_{1}^{-1}g_{\frac{n}{2}}}\\
\alpha_{\phi_{1,1}(h_{12}^{-1}h_{11})}&\alpha_{\phi_{1,1}(h_{12}^{-1}h_{12})}
&\dots&\alpha_{\phi_{1,1}(h_{12}^{-1}h_{1\frac{n}{2}})}\\
\vdots   &\vdots           &      &\vdots\\
\alpha_{\phi_{1,1}(h_{1\frac{n}{2}}^{-1}h_{11})}&\alpha_{\phi_{1,1}(h_{1\frac{n}{2}}^{-1}h_{12})}
&\dots &\alpha_{\phi_{1,1}(h^{-1}_{1\frac{n}{2}}h_{1\frac{n}{2}})}
\end{pmatrix}
$$
with $\phi_{1,1}:h_{1i} \mapsto g_{1}^{-1}g_{i}$ when  $ i = 1,2,\dots,\frac{n}{2},$
$$
A_{2}^{'}=\begin{pmatrix}
\alpha_{g_{1}^{-1}g_{\frac{n}{2}+1}}&\alpha_{g_{1}^{-1}g_{\frac{n}{2}+2}}&\dots&\alpha_{g_{1}^{-1}g_{n}}\\
\alpha_{\phi_{2,2}(h_{22}^{-1}h_{21})}&\alpha_{\phi_{2,2}(h_{22}^{-1}h_{22})}
&\dots&\alpha_{\phi_{2,2}(h_{22}^{-1}h_{2\frac{n}{2}})}\\
\vdots   &\vdots            &      &\vdots\\
\alpha_{\phi_{2,2}(h_{2\frac{n}{2}}^{-1}h_{21})}&\alpha_{\phi_{2,2}(h_{2\frac{n}{2}}^{-1}h_{22})}
&\dots&\alpha_{\phi_{2,2}(h_{2\frac{n}{2}}^{-1}h_{2\frac{n}{2}})}
\end{pmatrix}
$$
with $\phi_{2,2}:h_{2i} \mapsto g_{1}^{-1}g_{\frac{n}{2}+i} $ when $  i = 1,2,\dots,\frac{n}{2},$
$$
A_{3}^{'}=\begin{pmatrix}
\alpha_{g_{\frac{n}{2}+1}^{-1}g_{1}}&\alpha_{g_{\frac{n}{2}+1}^{-1}g_{2}}&\dots&\alpha_{g_{\frac{n}{2}+1}^{-1}g_{\frac{n}{2}}}\\
\alpha_{\phi_{2,3}(h_{22}^{-1}h_{21})}&\alpha_{\phi_{2,3}(h_{22}^{-1}h_{22})}
&\dots&\alpha_{\phi_{2,3}(h_{22}^{-1}h_{2\frac{n}{2}})}\\
\vdots   &\vdots         &      &\vdots\\
\alpha_{\phi_{2,3}(h_{2\frac{n}{2}}^{-1}h_{21})}&\alpha_{\phi_{2,3}(h_{2\frac{n}{2}}^{-1}h_{22})}
&\dots&\alpha_{\phi_{2,3}(h_{2\frac{n}{2}}^{-1}h_{2\frac{n}{2}})}\\
\end{pmatrix}
$$
with $\phi_{2,3}:h_{2i} \mapsto g_{\frac{n}{2}+1}^{-1}g_{i}$ when  $i = 1,2,\dots,\frac{n}{2},$ and

$$
A_{4}^{'}=\begin{pmatrix}
\alpha_{g_{\frac{n}{2}+1}^{-1}g_{\frac{n}{2}+1}}&\alpha_{g_{\frac{n}{2}+1}^{-1}g_{\frac{n}{2}+2}}
&\dots&\alpha_{g_{\frac{n}{2}+1}^{-1}g_{n}}\\
\alpha_{\phi_{1,4}(h_{12}^{-1}h_{11})}&\alpha_{\phi_{1,4}(h_{12}^{-1}h_{12})}
&\dots&\alpha_{\phi_{1,4}(h_{12}^{-1}h_{1\frac{n}{2}})}\\
\vdots   &\vdots           &      &\vdots\\
\alpha_{\phi_{1,4}(h_{1\frac{n}{2}}^{-1}h_{11})}&\alpha_{\phi_{1,4}(h_{1\frac{n}{2}}^{-1}h_{12})}
&\dots&\alpha_{\phi_{1,4}(h_{1\frac{n}{2}}^{-1}h_{1\frac{n}{2}})}\\
\end{pmatrix}
$$
with $\phi_{1,4}:h_{1i} \mapsto g_{\frac{n}{2}+1}^{-1}g_{\frac{n}{2}+i}  $
when $i = 1,2,\dots,\frac{n}{2}.$   We have\par

\begin{Proposition}
Assume the notation is as given above. Using the groups $H_1$ and $H_2$, we obtain a reversible composite matrix with the following form:
$$\mathcal{G}_{12}=\Omega^{\ast}(v)
=\begin{pmatrix} 	
A_{1}&B_{1}&A_{2}&B_{2}\\
B_{1}^{T}&A_{1}^{T}&B_{2}^{F}&A_{2}^{T}\\
A_{2}&B_{2}&A_{1}&B_{1}\\
B_{2}^{F}&A_{2}^{T}&B_{1}^{T}&A_{1}^{T}\\
\end{pmatrix},
$$
where
\begin{align*}
&A_{1} = circ(\alpha_{g_{1}}, \alpha_{g_{2}} ,\dots,\alpha_{g_{\frac{n}{4}}}),\\
&B_{1} = circ(\alpha_{g_{\frac{n}{4}+1}}, \alpha_{g_{\frac{n}{4}+2}} ,\dots,\alpha_{g_{\frac{n}{2}}}),\\
&A_{2} = circ(\alpha_{g_{\frac{n}{2}+1}}, \alpha_{g_{\frac{n}{2}+2}} ,\dots,\alpha_{g_{\frac{3n}{4}}}), \\
&B_{2} = (\frac{n}{8}+1)\text{-}circ(\alpha_{g_{\frac{3n}{4}+1}}, \alpha_{g_{\frac{3n}{4}+2}} ,\dots,\alpha_{g_{n}}).
\end{align*}

\end{Proposition}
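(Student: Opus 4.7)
The plan is to verify that applying the composite construction of Theorem~\ref{th.3.3} with the two groups $H_1$ and $H_2$ to the block-reversible matrix $\sigma(v)=\left(\begin{smallmatrix}A&B\\B&A\end{smallmatrix}\right)$ produces exactly the matrix $\mathcal{G}_{12}$ displayed in the statement. Since $\Omega^{\ast}(v)=\left(\begin{smallmatrix}A_1'&A_2'\\A_3'&A_4'\end{smallmatrix}\right)$ with each $A_l'$ of size $(n/2)\times(n/2)$, the task reduces to determining the $2\times 2$ block shape of each $A_l'$ in pieces of size $(n/4)\times(n/4)$.

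First I would observe that the four bijections $\phi_{1,1},\phi_{1,4},\phi_{2,2},\phi_{2,3}$ defining the four subblocks collapse into only two cases. Since $g_1=1$ and $g_{n/2+1}=y$ with $y^2=1$, a direct substitution into $\phi_{i,l}(h_{it})=g_j^{-1}g_{k+t-1}$ gives $\phi_{1,1}(h_{1t})=\phi_{1,4}(h_{1t})=g_t$ and $\phi_{2,2}(h_{2t})=\phi_{2,3}(h_{2t})=g_{n/2+t}$. Consequently $A_1'=A_4'$ and $A_2'=A_3'$, so it suffices to decompose $A_1'$ and $A_2'$.

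For $A_1'$ I would split rows and columns along the index $n/4$, using the dihedral listing $h_{1,t}=a^{t-1}$ for $1\le t\le n/4$ and $h_{1,n/4+t}=ba^{n/4-t}$ for $1\le t\le n/4$. The product $h_{1i}^{-1}h_{1j}$ falls into four cases depending on which halves $i$ and $j$ lie in, and each case is controlled by the single identity $a^{-s}b=ba^{s}$. Applying $\phi_{1,1}(h_{1k})=g_k$ then identifies the four resulting $(n/4)\times(n/4)$ blocks of $A_1'$ as $A_1$, $B_1$, $B_1^T$, $A_1^T$ in the claimed positions; the transposes appear because on the ``reflection'' rows the roles of row-shift and column-shift by $a$ are interchanged by the dihedral relation.

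The main obstacle will be the analogous computation for $A_2'$, where the quasi-dihedral relation $dcd^{-1}=c^{n/8-1}$ (rather than $c^{-1}$) produces the non-cyclic shift that yields the $(n/8+1)$-circulant $B_2$. I would set $m=n/8-1$ and first verify that $m^2\equiv 1\pmod{n/4}$, a congruence which uses $16\mid n$ and which guarantees the mutual consistency of the identities $dc^{s}=c^{sm}d$ and $c^{s}d=dc^{sm}$. For the top-right $(n/4)\times(n/4)$ subblock (row $i\in\{1,\dots,n/4\}$, column $j\in\{n/4+1,\dots,n/2\}$) one finds $h_{2i}^{-1}h_{2j}=dc^{s(i,j)}$ for an exponent linear in $i$ and $j$; as $i$ increases by $1$ the exponent shifts by $-m\equiv n/8+1\pmod{n/4}$, which is exactly the shift defining an $(n/8+1)$-circulant, so this subblock coincides with $B_2$. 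The top-left and bottom-right pieces are ordinary circulants coming from products inside the cyclic subgroup $\langle c\rangle$ and equal $A_2$ and $A_2^T$ by the same argument as in the dihedral case. For the bottom-left piece, a symmetric computation shows that its first row is the last row of $B_2$ read in reverse, hence it coincides with $B_2^F$. Reassembling these four blocks inside $A_2'=A_3'$ and the four dihedral blocks inside $A_1'=A_4'$ yields the displayed form of $\mathcal{G}_{12}$.
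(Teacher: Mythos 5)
Your proposal is correct, and it is essentially the verification the paper leaves implicit: the paper states this proposition without proof, and the intended justification is exactly your block-by-block computation (noting $\phi_{1,1}=\phi_{1,4}$ and $\phi_{2,2}=\phi_{2,3}$ because $g_1=1$ and $g_{n/2+1}=y$ is an involution, then reading off the four $(n/4)\times(n/4)$ pieces of $A_1'$ from the dihedral relation and of $A_2'$ from the quasi-dihedral relation, where $m=n/8-1$ with $m^2\equiv 1 \pmod{n/4}$ produces the $(n/8+1)$-circulant $B_2$ and its flip $B_2^F$). No gaps.
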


{\bf (II). The reversible composite matrix generated from $H_2$ and $H_3$}\par
Let $H_{2} =\langle c,d~|~ c^{\frac{n}{4}}=d^{2}=1,c^{d}=c^{\frac{n}{8}-1} \rangle$ be the quasi-dihedral group of order $\frac{n}{2}$ with its elements being listed as follows (in accordance to Equation \eqref{eq.3.1}):
\begin{center}
$\{h_{21},h_{22},\dots,h_{2\frac{n}{2}}\}
=\{1,c,c^2,\dots,c^{\frac{n}{4}-1},dc^{\frac{n}{4}-1},dc^{\frac{n}{4}-2},\dots,dc,d \},$
\end{center}
and let $H_{3} =\langle e~|~e^{\frac{n}{2}}=1\rangle$ be the cyclic group of order $\frac{n}{2}$ with its elements being listed as follows (in accordance to Equation \eqref{eq.3.1}):
\begin{center}
$\{h_{31},h_{32},\dots,h_{3\frac{n}{2}}\} =\{1,e^2,e^4,\dots,e^{\frac{n}{2}-2},e^{\frac{n}{4}}e^{\frac{n}{2}-2},e^{\frac{n}{4}}e^{\frac{n}{2}-4},
\dots,e^{\frac{n}{4}}e^{2},e^{\frac{n}{4}} \}.$
\end{center}

Next, using the method given in Section $3$, we can obtain the matrix $\Omega^{\ast}(v)$ in the following form:
$$
\sigma(v)=\begin{pmatrix} 	
A&B\\
B&A
\end{pmatrix}
\longrightarrow
\Omega^{\ast}(v)=\begin{pmatrix}
A_{1}^{'}&A_{2}^{'}\\
 A_{3}^{'}&A_{4}^{'}
 \end{pmatrix},
$$
where
$$
A_{1}^{'}=\begin{pmatrix}
\alpha_{g_{1}^{-1}g_{1}}&\alpha_{g_{1}^{-1}g_{2}}&\dots&\alpha_{g_{1}^{-1}g_{\frac{n}{2}}}\\
\alpha_{\phi_{3,1}(h_{32}^{-1}h_{31})}&\alpha_{\phi_{3,1}(h_{32}^{-1}h_{32})}
&\dots&\alpha_{\phi_{3,1}(h_{32}^{-1}h_{3\frac{n}{2}})}\\
\vdots   &\vdots         &      &\vdots\\
\alpha_{\phi_{3,1}(h_{3\frac{n}{2}}^{-1}h_{31})}&\alpha_{\phi_{3,1}(h_{3\frac{n}{2}}^{-1}h_{32})}
&\dots&\alpha_{\phi_{3,1}(h_{3\frac{n}{2}}^{-1}h_{3\frac{n}{2}})}\\
\end{pmatrix}
$$
with $\phi_{3,1}:h_{3i} \mapsto g_{1}^{-1}g_{i}$ when  $ i = 1,2,\dots,\frac{n}{2},$
$$
A_{2}^{'}=\begin{pmatrix}
\alpha_{g_{1}^{-1}g_{\frac{n}{2}+1}}&\alpha_{g_{1}^{-1}g_{\frac{n}{2}+2}}&\dots&\alpha_{g_{1}^{-1}g_{n}}\\
\alpha_{\phi_{2,2}(h_{22}^{-1}h_{21})}&\alpha_{\phi_{2,2}(h_{22}^{-1}h_{22})}
&\dots&\alpha_{\phi_{2,2}(h_{22}^{-1}h_{2\frac{n}{2}})}\\
\vdots   &\vdots        &      &\vdots\\
\alpha_{\phi_{2,2}(h_{2\frac{n}{2}}^{-1}h_{21})}&\alpha_{\phi_{2,2}(h_{2\frac{n}{2}}^{-1}h_{22})}
&\dots&\alpha_{\phi_{2,2}(h_{2\frac{n}{2}}^{-1}h_{2\frac{n}{2}})}\\
\end{pmatrix}
$$
with $\phi_{2,2}:h_{2i} \mapsto g_{1}^{-1}g_{\frac{n}{2}+i} $ when $  i = 1,2,\dots,\frac{n}{2},$

$$
A_{3}^{'}=\begin{pmatrix}
\alpha_{g_{\frac{n}{2}+1}^{-1}g_{1}}&\alpha_{g_{\frac{n}{2}+1}^{-1}g_{2}}&\dots&\alpha_{g_{\frac{n}{2}+1}^{-1}g_{\frac{n}{2}}}\\
\alpha_{\phi_{2,3}(h_{22}^{-1}h_{21})}&\alpha_{\phi_{2,3}(h_{22}^{-1}h_{22})}
&\dots&\alpha_{\phi_{2,3}(h_{22}^{-1}h_{2\frac{n}{2}})}\\
\vdots   &\vdots           &      &\vdots\\
\alpha_{\phi_{2,3}(h_{2\frac{n}{2}}^{-1}h_{21})}&\alpha_{\phi_{2,3}(h_{2\frac{n}{2}}^{-1}h_{22})}
&\dots &\alpha_{\phi_{2,3}(h_{2\frac{n}{2}}^{-1}h_{2\frac{n}{2}})}\\
\end{pmatrix}
$$
with $\phi_{2,3}:h_{2i} \mapsto g_{\frac{n}{2}+1}^{-1}g_{i}$ when  $i = 1,2,\dots,\frac{n}{2},$ and
$$
A_{4}^{'}=\begin{pmatrix}
\alpha_{g_{\frac{n}{2}+1}^{-1}g_{\frac{n}{2}+1}}&\alpha_{g_{\frac{n}{2}+1}^{-1}g_{\frac{n}{2}+2}}
&\dots&\alpha_{g_{\frac{n}{2}+1}^{-1}g_{n}}\\
\alpha_{\phi_{3,4}(h_{32}^{-1}h_{31})}&\alpha_{\phi_{3,4}(h_{32}^{-1}h_{32})}
&\dots&\alpha_{\phi_{3,4}(h_{32}^{-1}h_{3\frac{n}{2}})}\\
\vdots   &\vdots          &      &\vdots\\
\alpha_{\phi_{3,4}(h_{3\frac{n}{2}}^{-1}h_{31})}&\alpha_{\phi_{3,4}(h_{3\frac{n}{2}}^{-1}h_{32})}
&\dots &\alpha_{\phi_{3,4}(h_{3\frac{n}{2}}^{-1}h_{3\frac{n}{2}})}\\
\end{pmatrix}
$$
with $\phi_{3,4}:h_{3i} \mapsto g_{\frac{n}{2}+1}^{-1}g_{\frac{n}{2}+i} $ when $i = 1,2,\dots,\frac{n}{2}.$  We have\par

\begin{Proposition}
Assume the notation is as given above. Using groups $H_2$ and $H_3$, we obtain a reversible composite matrix with the following form:
$$\mathcal{G}_{32}=\Omega^{\ast}(v)
=\begin{pmatrix} 	
A_{1}&B_{1}&A_{3}&B_{3}\\
B_{2}&A_{2}&B_{3}^{F}&A_{3}^{T}\\
A_{3}&B_{3}&A_{1}&B_{1}\\
B_{3}^{F}&A_{3}^{T}&B_{2}&A_{2}\\
\end{pmatrix},
$$
where
\begin{align*}
&A_{1} = circ(\alpha_{g_{1}}, \alpha_{g_{2}} ,\dots,\alpha_{g_{\frac{n}{4}}}),\\
&B_{1} = revcirc(\alpha_{g_{\frac{n}{4}+1}}, \alpha_{g_{\frac{n}{4}+2}} ,\dots,\alpha_{g_{\frac{n}{2}}}),\\
&A_{2}= circ(\alpha_{g_{1}},\alpha_{g_{\frac{n}{4}}},\alpha_{g_{\frac{n}{4}-1}},\dots,\alpha_{g_{2}} ),\\
&B_{2} = revcirc(\alpha_{g_{\frac{n}{2}-1}}, \alpha_{g_{\frac{n}{2}-2}} ,\dots, \alpha_{g_{\frac{n}{4}+1}},\alpha_{g_{\frac{n}{2}}}),\\
&A_{3} = circ(\alpha_{g_{\frac{n}{2}+1}}, \alpha_{g_{\frac{n}{2}+2}} ,\dots,\alpha_{g_{\frac{3n}{4}}}), \\
&B_{3} = (\frac{n}{8}+1)\text{-}circ(\alpha_{g_{\frac{3n}{4}+1}}, \alpha_{g_{\frac{3n}{4}+2}} ,\dots,\alpha_{g_{n}}).
\end{align*}

\end{Proposition}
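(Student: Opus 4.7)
The plan is to verify the claimed block decomposition of $\mathcal{G}_{32}$ by computing each of the four $\frac{n}{2}\times\frac{n}{2}$ sub-blocks $A_1',A_2',A_3',A_4'$ of $\Omega^{*}(v)$ directly from Definition~\ref{def.2.5}, using the specified listings of $H_2$ and $H_3$ and the associated bijections $\phi_{i,l}$, and then checking that the resulting entries assemble into the $\frac{n}{4}\times\frac{n}{4}$ building blocks $A_1,A_2,A_3,B_1,B_2,B_3$ in the stated arrangement. From the listing of $G=C_2\times C_{n/2}$ and Equation~\eqref{eq.2.3}, $\sigma(v)$ is a $2\times 2$ block matrix whose diagonal blocks are the circulant $A=\mathrm{circ}(\alpha_{g_1},\dots,\alpha_{g_{n/2}})$ and whose off-diagonal blocks are the circulant $B=\mathrm{circ}(\alpha_{g_{n/2+1}},\dots,\alpha_{g_n})$; the composite construction retains the first row of each block and modifies the remaining rows according to the chosen group $H_i$.

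I would first handle the diagonal blocks $A_1'$ and $A_4'$, which are built from the cyclic group $H_3$. For $A_1'$ the first row is $(\alpha_{g_1},\dots,\alpha_{g_{n/2}})$, which splits naturally into the first rows of the target sub-blocks $A_1$ and $B_1$. For the next $\frac{n}{4}-1$ rows I would compute $\phi_{3,1}(h_{3,t}^{-1}h_{3,s})$ with $h_{3,t}$ ranging over the ``subgroup half'' of the $H_3$-listing; these produce a cyclic shift of the first row on the first $\frac{n}{4}$ columns (giving $\mathrm{circ}(\alpha_{g_1},\dots,\alpha_{g_{n/4}})=A_1$) and a compatible pattern on the last $\frac{n}{4}$ columns (giving $B_1$). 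For the final $\frac{n}{4}$ rows, the factor $\beta=e^{n/4}$ in the ``coset half'' of the $H_3$-listing shifts the $g$-indices by $n/4$ modulo $n/2$, which after applying $\phi_{3,1}$ rearranges the cyclic pattern of the first row into $\mathrm{circ}(\alpha_{g_1},\alpha_{g_{n/4}},\alpha_{g_{n/4-1}},\dots,\alpha_{g_2})=A_2$ on the right half and the matching $B_2$ on the left half. The identical argument with $\phi_{3,4}$ settles $A_4'$.

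The off-diagonal blocks $A_2'$ and $A_3'$ are built from the quasi-dihedral group $H_2=\langle c,d\mid c^{n/4}=d^2=1,\,c^d=c^{n/8-1}\rangle$. The subgroup half of the $H_2$-listing produces the circulant $A_3$ and the $(n/8+1)$-circulant $B_3$, the non-trivial shift coming precisely from the relation $c^d=c^{n/8-1}$. In the coset half, conjugation by $d$ reverses the order of the subgroup elements and simultaneously composes with the $(n/8+1)$-shift; tracking this through $\phi_{2,2}$ (resp.\ $\phi_{2,3}$) yields $B_3^F$ on the left and $A_3^T$ on the right. Assembling the four $\frac{n}{2}\times\frac{n}{2}$ sub-blocks then produces $\mathcal{G}_{32}$. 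The main obstacle is the bookkeeping in the coset half of the $H_2$ computation: one must simultaneously track the cyclic shift forced by $c^d=c^{n/8-1}$ and the row-order reversal coming from the listing in Equation~\eqref{eq.3.1}, and verify that their combined effect on the $\alpha$-indices is exactly the flip ($^F$) and transpose ($^T$) prescribed by the statement, which requires careful index arithmetic modulo $n/4$.
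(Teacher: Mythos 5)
Your overall plan --- compute the four $\frac{n}{2}\times\frac{n}{2}$ blocks of $\Omega^{\ast}(v)$ directly from Definition 2.5 with the given listings and bijections, then reassemble the entries into $\frac{n}{4}\times\frac{n}{4}$ pieces --- is exactly what is needed: the paper states this proposition without any proof, so a direct index computation is the only route, and your treatment of the off-diagonal blocks is sound. The identity $c^{-m}dc^{k}=dc^{\,k-m(n/8-1)}$ does produce the circulant $A_3$ and the $(\frac{n}{8}+1)$-circulant $B_3$ in the subgroup half, and $(dc^{k})^{-1}dc^{k'}=c^{k'-k}$ together with $(dc^{k})^{-1}c^{v-1}=dc^{\,v-1-k(n/8-1)}$ gives $A_3^{T}$ and $B_3^{F}$ in the coset half, as you assert.

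The genuine gap is in the diagonal blocks. Your computation tacitly assumes that the listing of $H_3$ in Section 4.1 is a bijective enumeration in the form of Equation (3.1), i.e.\ that $\beta=e^{n/4}$ lies outside the index-two subgroup $T=\langle e^{2}\rangle$. But the construction requires $16\mid n$, so $n/4$ is even and $e^{n/4}=(e^{2})^{n/8}\in T$; the displayed listing of $H_3$ then runs through only the $n/4$ even powers of $e$, each twice, so $\phi_{3,1}$ and $\phi_{3,4}$ are not well defined on group elements and the entry $\alpha_{\phi_{3,l}(h_{3u}^{-1}h_{3v})}$ is ambiguous. This ambiguity is precisely what decides whether the lower-left $\frac{n}{4}\times\frac{n}{4}$ corner of $A_{1}'$ carries indices in $\{\frac{n}{4}+1,\dots,\frac{n}{2}\}$ (giving $B_2$, as claimed) or in $\{1,\dots,\frac{n}{4}\}$. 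To complete your argument you must fix a convention --- e.g.\ work with listing positions rather than group elements, assigning the products $t^{-1}t'$ and $(\beta t)^{-1}(\beta t')$ to the subgroup half and $t^{-1}(\beta t')$, $(\beta t)^{-1}t'$ to the coset half --- under which your index arithmetic does reproduce $A_1$, $B_1$, $B_2$, $A_2$ in the stated positions. The defect originates in the paper's own definition of $H_3$, but your proof cannot be carried out as written without addressing it.
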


{\bf (III). The reversible composite matrix generated from $H_2$}\par
Let $H_{2} =\langle c,d~|~ c^{\frac{n}{4}}=d^{2}=1,c^{d}=c^{\frac{n}{8}-1} \rangle$ be the quasi-dihedral group of order $\frac{n}{2}$ with its elements being listed as follows (in accordance to Equation \eqref{eq.3.1}):
\begin{center}
$\{ h_{21},h_{22},\dots,h_{2\frac{n}{2}}\}
=\{ 1,c,c^2,\dots,c^{\frac{n}{4}-1},dc^{\frac{n}{4}-1},dc^{\frac{n}{4}-2},\dots,dc,d \}.$
\end{center}

Next, using the method given in Section $3$, we obtain the matrix $\Omega^{\ast}(v)$ in the following form:
$$
\sigma(v)=\begin{pmatrix} 	
A&B\\
B&A
\end{pmatrix}
\longrightarrow
\Omega^{\ast}(v)=\begin{pmatrix}
A_{1}^{'}&A_{2}^{'}\\
 A_{3}^{'}&A_{4}^{'}
 \end{pmatrix},
$$
where
$$
A_{1}^{'}=\begin{pmatrix}
\alpha_{g_{1}^{-1}g_{1}}&\alpha_{g_{1}^{-1}g_{2}}&\dots&\alpha_{g_{1}^{-1}g_{\frac{n}{2}}}\\
\alpha_{\phi_{2,1}(h_{22}^{-1}h_{21})}&\alpha_{\phi_{2,1}(h_{22}^{-1}h_{22})}
&\dots&\alpha_{\phi_{2,1}(h_{22}^{-1}h_{2\frac{n}{2}})}\\
\vdots   &\vdots            &      &\vdots\\
\alpha_{\phi_{2,1}(h_{2\frac{n}{2}}^{-1}h_{21})}&\alpha_{\phi_{2,1}(h_{2\frac{n}{2}}^{-1}h_{22})}
&\dots &\alpha_{\phi_{2,1}(h_{2\frac{n}{2}}^{-1}h_{2\frac{n}{2}})}\\
\end{pmatrix}
$$
with $\phi_{2,1}:h_{2i} \mapsto g_{1}^{-1}g_{i}$ when  $ i = 1,2,\dots,\frac{n}{2},$
$$
A_{2}^{'}=\begin{pmatrix}
\alpha_{g_{1}^{-1}g_{\frac{n}{2}+1}}&\alpha_{g_{1}^{-1}g_{\frac{n}{2}+2}}&\dots&\alpha_{g_{1}^{-1}g_{n}}\\
\alpha_{\phi_{2,2}(h_{22}^{-1}h_{21})}&\alpha_{\phi_{2,2}(h_{22}^{-1}h_{22})}
&\dots&\alpha_{\phi_{2,2}(h_{22}^{-1}h_{2\frac{n}{2}})}\\
\vdots   &\vdots            &      &\vdots\\
\alpha_{\phi_{2,2}(h_{2\frac{n}{2}}^{-1}h_{21})}&\alpha_{\phi_{2,2}(h_{2\frac{n}{2}}^{-1}h_{22})}
&\dots&\alpha_{\phi_{2,2}(h_{2\frac{n}{2}}^{-1}h_{2\frac{n}{2}})}\\
\end{pmatrix}
$$
with $\phi_{2,2}:h_{2i} \mapsto g_{1}^{-1}g_{\frac{n}{2}+i} $ when $  i = 1,2,\dots,\frac{n}{2},$

$$
A_{3}^{'}=\begin{pmatrix}
\alpha_{g_{\frac{n}{2}+1}^{-1}g_{1}}&\alpha_{g_{\frac{n}{2}+1}^{-1}g_{2}}&\dots&\alpha_{g_{\frac{n}{2}+1}^{-1}g_{\frac{n}{2}}}\\
\alpha_{\phi_{2,3}(h_{22}^{-1}h_{21})}&\alpha_{\phi_{2,3}(h_{22}^{-1}h_{22})}
&\dots&\alpha_{\phi_{2,3}(h_{22}^{-1}h_{2\frac{n}{2}})}\\
\vdots   &\vdots           &      &\vdots\\
\alpha_{\phi_{2,3}(h_{2\frac{n}{2}}^{-1}h_{21})}&\alpha_{\phi_{2,3}(h_{2\frac{n}{2}}^{-1}h_{22})}
&\dots&\alpha_{\phi_{2,3}(h_{2\frac{n}{2}}^{-1}h_{2\frac{n}{2}})}\\
\end{pmatrix}
$$
with $\phi_{2,3}:h_{2i} \mapsto g_{\frac{n}{2}+1}^{-1}g_{i}$ when  $i = 1,2,\dots,\frac{n}{2},$ and

$$
A_{4}^{'}=\begin{pmatrix}
\alpha_{g_{\frac{n}{2}+1}^{-1}g_{\frac{n}{2}+1}}&\alpha_{g_{\frac{n}{2}+1}^{-1}g_{\frac{n}{2}+2}}
&\dots&\alpha_{g_{\frac{n}{2}+1}^{-1}g_{n}}\\
\alpha_{\phi_{2,4}(h_{22}^{-1}h_{21})}&\alpha_{\phi_{2,4}(h_{22}^{-1}h_{22})}
&\dots&\alpha_{\phi_{2,4}(h_{22}^{-1}h_{2\frac{n}{2}})}\\
\vdots   &\vdots        &      &\vdots\\
\alpha_{\phi_{2,4}(h_{2\frac{n}{2}}^{-1}h_{21})}&\alpha_{\phi_{2,4}(h_{2\frac{n}{2}}^{-1}h_{22})}
&\dots &\alpha_{\phi_{2,4}(h_{2\frac{n}{2}}^{-1}h_{2\frac{n}{2}})}\\
\end{pmatrix}
$$
with $\phi_{2,4}:h_{2i} \mapsto g_{\frac{n}{2}+1}^{-1}g_{\frac{n}{2}+i}$ when $i = 1,2,\dots,\frac{n}{2}.$ We have\par

\begin{Proposition}
Assume the notation is as given above. Using the group $H_2$, we obtain a reversible composite matrix with the following form:
$$\mathcal{G}_{22}=\Omega^{\ast}(v)
=\begin{pmatrix} 	
A_{1}&B_{1}&A_{2}&B_{2}\\
B_{1}^{F}&A_{1}^{T}&B_{2}^{F}&A_{2}^{T}\\
A_{2}&B_{2}&A_{1}&B_{1}\\
B_{2}^{F}&A_{2}^{T}&B_{1}^{F}&A_{1}^{T}\\
\end{pmatrix},
$$
where
\begin{align*}
&A_{1} = circ(\alpha_{g_{1}}, \alpha_{g_{2}} ,\dots,\alpha_{g_{\frac{n}{4}}}),\\
&B_{1} = (\frac{n}{8}+1)\text{-}circ(\alpha_{g_{\frac{n}{4}+1}}, \alpha_{g_{\frac{n}{4}+2}},\dots,\alpha_{g_{\frac{n}{2}}}),\\
&A_{2} = circ(\alpha_{g_{\frac{n}{2}+1}}, \alpha_{g_{\frac{n}{2}+2}} ,\dots,\alpha_{g_{\frac{3n}{4}}}), \\
&B_{2} = (\frac{n}{8}+1)\text{-}circ(\alpha_{g_{\frac{3n}{4}+1}}, \alpha_{g_{\frac{3n}{4}+2}} ,\dots,\alpha_{g_{n}}).
\end{align*}

\end{Proposition}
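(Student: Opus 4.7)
The plan is to establish the claimed structure of $\mathcal{G}_{22}$ by directly computing each of the four $(n/2)\times(n/2)$ blocks $A'_1, A'_2, A'_3, A'_4$ of $\Omega^{\ast}(v)$, using the listing of elements of the quasi-dihedral group $H_2$ fixed by \eqref{eq.3.1}. Since all four blocks are built from the same group $H_2$ via different bijections $\phi_{2,l}$ that merely relabel ring elements by a left-translation on $G$, I would first analyze $A'_1$ in detail and then obtain $A'_2, A'_3, A'_4$ by the same argument with the substituted labels.

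For $A'_1$, I would split it into four $(n/4)\times(n/4)$ submatrices, matching the partition of $H_2$ into the cyclic subgroup $\{1, c, \ldots, c^{n/4-1}\}$ (rows and columns $1, \ldots, n/4$) and the reflection coset $\{dc^{n/4-1}, \ldots, dc, d\}$ (rows and columns $n/4+1, \ldots, n/2$). The first row of $A'_1$ equals the first row of the corresponding block of $\sigma(v)$ by the definition of the composite matrix, so its left and right halves directly supply the first rows of the expected $A_1$ and $B_1$. For rows indexed by $t \in \{2, \ldots, n/4\}$ the entries are $\alpha_{\phi_{2,1}(c^{-(t-1)}h_{2s})}$: when $s \leq n/4$ this immediately gives a cyclic shift by $t-1$, producing the circulant block $A_1$; when $s > n/4$ one iterates the defining relation $cd = dc^{n/8-1}$ to move $c^{-(t-1)}$ past $d$, producing a shift of the column index by $(n/8+1)(t-1) \pmod{n/4}$ per row, which is exactly the defining pattern of the $(n/8+1)$-circulant matrix $B_1$. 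For rows with $t \in \{n/4+1, \ldots, n/2\}$ the element $h_{2t}^{-1}$ lies in the reflection coset; a parallel computation shows that the left half of these rows forms the flipped block $B_1^F$, while the right half forms $A_1^T$.

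The same argument applied to $A'_2, A'_3, A'_4$ changes only the bijection $\phi_{2,l}$, which relabels ring elements via the shift by $g_{n/2+1}^{-1}$ or its inverse; this turns $\alpha_{g_1}, \ldots, \alpha_{g_{n/2}}$ into $\alpha_{g_{n/2+1}}, \ldots, \alpha_{g_n}$ and therefore produces the off-diagonal pair $A_2, B_2$ and the diagonal repetition $A_1, B_1$ in the correct positions. Reassembling the four $(n/2)\times(n/2)$ blocks into $\Omega^{\ast}(v)$ yields precisely the stated form of $\mathcal{G}_{22}$.

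The main obstacle is the exponent bookkeeping inside the quasi-dihedral commutation: one must verify that iterating $cd = dc^{n/8-1}$ produces exactly an $(n/8+1)$-circulant shift (equivalently $-(n/8-1) \equiv n/8+1 \pmod{n/4}$), and that the reversed listing of the reflection coset in \eqref{eq.3.1} converts this shift into a flip $B_1^F$ rather than a plain transpose $B_1^T$ in the lower-left corner. Both facts depend crucially on the specific exponent $n/8-1$ in the definition of $H_2$ and on the order imposed by \eqref{eq.3.1}; any other choice would alter the claimed structure.
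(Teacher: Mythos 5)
Your proposal is correct and amounts to exactly the verification the paper leaves implicit: the paper states this proposition immediately after displaying the four blocks $A_1',\dots,A_4'$ with their maps $\phi_{2,l}$ and offers no further proof, so the intended justification is precisely the direct computation you outline. Your exponent bookkeeping is right — in particular $-(n/8-1)\equiv n/8+1 \pmod{n/4}$ gives the $(\frac{n}{8}+1)$-circulant blocks, and the reversed listing of the coset $\{dc^{n/4-1},\dots,d\}$ in \eqref{eq.3.1} is what produces $B_i^F$ in the reflection rows and $A_i^T$ in their right halves.
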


\subsection{The Case of $\frac{n}{r}=4$}
Let $G = \langle x, y~|~ x^{\frac{n}{4}} = y^4 = 1, xy = yx \rangle= C_{4} \times C_{\frac{n}{4}}$, where $n$ is a positive integer that is divisible by $16$, and let
\begin{center}
$v_{1}=\sum\limits_{i=0}^{\frac{n}{4}-1}[ \alpha_{g_{i+1}}x^{i}+\alpha_{g_{i+(\frac{3n}{4}+1)}}x^{i}y^{2}
 +\alpha_{g_{i+\frac{n}{4}+1}}x^{i}y+\alpha_{g_{i+\frac{n}{2}+1}}x^{i}y^{3}] \in R(C_{4}\times C_{\frac{n}{4}}).$
\end{center}

By Theorem~\ref{th.3.2}, the partitioned matrix
$$\sigma(v_{1})=\begin{pmatrix}	
A&B&C&D\\
C&A&D&B\\
B&D&A&C\\
D&C&B&A
\end{pmatrix}
$$
is a reversible block matrix, where $A=circ(\alpha_{g_{1}},\dots,\alpha_{g_{\frac{n}{4}}})$, $B=circ(\alpha_{g_{\frac{n}{4}+1}},\dots,$ $\alpha_{g_{\frac{n}{2}}})$, $C=circ(\alpha_{g_{\frac{n}{2}+1}},\dots,\alpha_{g_{\frac{3n}{4}}})$,
and $D=circ(\alpha_{g_{\frac{3n}{4}+1}},\dots,\alpha_{g_{n}})$. \par
In this subsection, for the case of $\frac{n}{r}=4$, we also use the three groups $H_{i}$ from the Subsection $4.1$ and the method given in Section $3$ to generate some reversible composite matrices and later obtain some DNA codes using these matrices. Since there are many different types of composite matrices, we will not describe the precise form of each composite matrix. We use $G_{ijkl}$ to denote the composite matrix obtained by constructing $A'_l$ in the four blocks of the first row of $\sigma(v)$ using $H_{i}$, $H_{j}$, $H_{k}$ and $H_{l}$ ($i,j,k,l \in \{1, 2, 3\}$), respectively.
For example, $G_{1123}$ is a composite matrix obtained by constructing $A'_l$ in the four blocks of the first row of $\sigma(v)$ using $H_{1}$, $H_{1}$, $H_{2}$ and $H_{3}$ respectively.

The following example is to  illustrate the above construction.\par

\begin{Example}
Assume $G= \langle x,y~|~x^4=y^4=1,xy=yx\rangle $, $R=\mathbb{F}_4$ and $v\in RG$ in the form of Equation\eqref{eq.3.2}. Let $\mathbf{G}_{1111}= \Omega^{\ast}(v)$ be a composite matrix generated from $H_1$ using the method from Subsection $4.2$ and its form is as follows:
$$
\addtocounter{MaxMatrixCols}{10}
\mathbf{G}_{1111}=\left(
\begin{array}{cc:cc:cc:cc}
A_{1}&B_{1}&A_{2}&B_{2}&A_{3}&B_{3}&A_{4}&B_{4}\\
B_{1}^{T}&A_{1}^{T}&B_{2}^{T}&A_{2}^{T}&B_{3}^{T}&A_{3}^{T}&B_{4}^{T}&A_{4}^{T}\\
\hdashline
A_{3}&B_{3}&A_{1}&B_{1}&A_{4}&B_{4}&A_{2}&B_{2}\\
B_{3}^{T}&A_{3}^{T}&B_{1}^{T}&A_{1}^{T}&B_{4}^{T}&A_{4}^{T}&B_{2}^{T}&A_{2}^{T}\\
\hdashline
A_{2}&B_{2}&A_{4}&B_{4}&A_{1}&B_{1}&A_{3}&B_{3}\\
B_{2}^{T}&A_{2}^{T}&B_{4}^{T}&A_{4}^{T}&B_{1}^{T}&A_{1}^{T}&B_{3}^{T}&A_{3}^{T}\\
\hdashline
A_{4}&B_{4}&A_{3}&B_{3}&A_{2}&B_{2}&A_{1}&B_{1}\\
B_{4}^{T}&A_{4}^{T}&B_{3}^{T}&A_{3}^{T}&B_{2}^{T}&A_{2}^{T}&B_{1}^{T}&A_{1}^{T}\\
\end{array}
\right)
$$
$$
=
\addtocounter{MaxMatrixCols}{10}
\left(
\begin{array}{cccc:cccc:cccc:cccc}
1&1&\omega^{2}&0&0&\omega^{2}&\omega^{2} &\omega^{2}&\omega &\omega&\omega^2 &0&\omega &\omega &1&\omega \\
1&1&0&\omega^2&   \omega^2&0&\omega^2&\omega^2&   \omega&\omega&0&\omega^2&   \omega&\omega&\omega&1\\
\omega^2&0&1&1&   \omega^2&\omega^2&0&\omega^2&   \omega^2&0&\omega&\omega&   1&\omega&\omega&\omega\\
0&\omega^2&1&1&   \omega^2&\omega^2&\omega^2&0&   0&\omega^2&\omega&\omega&   \omega&1&\omega&\omega\\
\hdashline
\omega&\omega&\omega^2&0&    1&1&\omega^2&0&   \omega&\omega&1&\omega&   0&\omega^2&\omega^2&\omega^2\\
\omega&\omega&0&\omega^2&    1&1&0&\omega^2&   \omega&\omega&\omega&1&   \omega^2&0&\omega^2&\omega^2\\
\omega^2&0&\omega&\omega&    \omega^2&0&1&1&   1&\omega&\omega&\omega&   \omega^2&\omega^2&0&\omega^2\\
0&\omega^2&\omega&\omega&    0&\omega^2&1&1&   \omega&1&\omega&\omega&   \omega^2&\omega^2&\omega^2&0\\
\hdashline
0&\omega^2&\omega^2&\omega^2&   \omega&\omega&1&\omega&    1&1&\omega^2&0&   w&w&w^2&0\\
\omega^2&0&\omega^2&\omega^2&   \omega&\omega&\omega&1&    1&1&0&\omega^2&   w&w&0&w^2\\
\omega^2&\omega^2&0&\omega^2&   1&\omega&\omega&\omega&    \omega^2&0&1&1&   w^2&0&w&w\\
\omega^2&\omega^2&\omega^2&0&   \omega&1&\omega&\omega&    0&\omega^2&1&1&   0&w^2&w&w\\
\hdashline
\omega&\omega&1&\omega&   \omega&\omega&\omega^2&0&   0&\omega^2&\omega^2&\omega^2&   1&1&\omega^2&0\\
\omega&\omega&\omega&1&   \omega&\omega&0&\omega^2&   \omega^2&0&\omega^2&\omega^2&   1&1&0&\omega^2\\
1&\omega&\omega&\omega&   \omega^2&0&\omega&\omega&   \omega^2&\omega^2&0&\omega^2&   \omega^2&0&1&1\\
\omega&1&\omega&\omega&   0&\omega^2&\omega&\omega&   \omega^2&\omega^2&\omega^2&0&   0&\omega^2&1&1\\
\end{array}
\right),
$$
where $A_{1} = circ(1, 1), B_{1} = circ(\omega^2, 0), A_{2} = circ(0, \omega^2), B_{2} = circ(\omega^2, \omega^2), A_{3} = circ(\omega, \omega)$, and $B_{3} = circ(\omega^2,0), A_{4} = circ(\omega, \omega), B_{4} = circ(1, \omega)$. Let $\mathcal{D}^{\ast}=\langle \mathbf{G}_{1111} \rangle$, we can observe that $\mathbf{1}=(1,1,\dots,1) \in \mathcal{D}^{\ast}$. Then by Magma, we know that there are $65536$ codewords in $\mathcal{D}^{\ast}$ satisfying the HD, RV, and RC constraints with $d = 6$. We can also calculate the $GC$-weight enumerator of $\mathcal{D}^{\ast}$ using Magma:
\begin{center}
$GCW_{\mathcal{D}^{\ast}}(X_{1}, X_{2}) =16X_{1}^{16} + 128X_{1}^{14}X_{2}^2 + 4032X_{1}^{12}X_{2}^4 + 15232X_{1}^{10}X_{2}^6 + 26720X_{1}^8X_{2}^8
+ 15232X_{1}^6X_{2}^{10} + 4032X_{1}^4X_{2}^{12} + 128X_{1}^2X_{2}^{14}+ 16X_{2}^{16}$.
\end{center}

Let $D=\eta(\mathcal{D}^{\ast})$ be the  DNA code which obtained by employing the map $\eta$ to the codewords of $\mathcal{D}^{\ast}$. By the GC-weight enumerator, we know that $D$ has $26720$ codewords that satisfies the HD, RV, RC, and GC constraints. We can employ Algorithm $1$ in~\cite{b19} to the DNA code $D$ to filter out $8$-conflict free DNA codes with length of $16$, in which each DNA codewords satisfies the HD, RV, RC, and GC constraints ($d =6$) and in which each DNA string is free from secondary structure. The detailed calculation methods, algorithms and results, please see~\cite{b19}. We will not elaborate here.\par
\end{Example}

\section{Computational Results}
In this section, we use the matrices $\mathcal{G}_{ij}$ and $\mathbf{G}_{ijkl}$ (where $i,j,k,l \in \{1,2,3\}$) obtained from Subsections $4.1$ and $4.2$ to search for DNA codes of different lengths. We perform our searches in the software package MAGMA~\cite{b8} and Matlab. Since the length $n$ is large, for each DNA code $D$, we only compare the maximum size of $D$ of length $n$ satisfying the HD and RC constraints for a given $d$. Many of our lower bounds are better than the current best known bounds. We use asterisks to represent data that is equal to the current best known bound, diamond to represent data that is better than the best known bound, and dagger to represent data that has not available in previous literature. Since the DNA codes constructed using our method satisfy reversibility, if we want the codewords of $D$ satisfying RC constraint, we only need to ensure that the vector $\mathbf{1}=(1,1,\dots,1)\in D$. Since there are no results on  the lower bounds on $A_{4}^{RC}(n, d)$ for $n$ = $80$, $96$, $160$ in the literature, we obtain some new results for these three lengths of DNA codes. Additionally, the lower bounds of most of the codes in Tables $1$ and $2$ are improved for fixed parameters $n$ and $d$.\par

\section{Conclusion}

In this paper, we present a new method for constructing reversible DNA codes. We provide a specific form of group $G$, which ensures that our construction result is a general construction. We employ our construction method to generate some DNA codes, and compare the lower bounds of these DNA codes that satisfy certain constraints to previous data. Many of our codes have better lower bounds on the sizes of some known DNA codes. We also construct, new to the literature, DNA codes of lengths $80$, $96$, and $160$ for some fixed Hamming distance $d$ that satisfy some constraints.

\vskip 3mm
{\bf Conflict of interest}: The authors declare that there is no conflict of interest.

\vskip 3mm
{\bf Acknowledgement.}

This work was supported by National Natural Science Foundation of China under Grant Nos.12271199 and 12171191 and The Fundamental Research Funds for the Central Universities 30106220482. Whan-Hyuk Choi is supported by the National Research Foundation of Korea under Grant NRF-2022R1C1C2011689.

\begin{sidewaystable}[htp]
\caption{Lower bounds on $A_{4}^{RC}(n, d)$ from $\mathcal{G}_{ij}$, where $i,j \in \{ 1,2,3\}$}
\centering
\begin{threeparttable}
\tiny
\begin{tabular}{ccccc}
 \toprule
 \text{Generator Matrix} & $n$ & $d$ & $A_{4}^{RC}(n,d)$ & \text{Best Known}$A_{4}^{RC}(n,d)$ \\
 \midrule
 $\mathcal{G}_{12}$ & 32 & 4&$281474976710656^{\ast} $&281474976710656\cite{b19}\\
 $\mathcal{G}_{12}$ & 48 & 3&$ 1208925819614629174706176^{\dagger}$ &-\\
 $\mathcal{G}_{12}$ & 48 & 4&$75557863725914323419136^{\ast}$ &75557863725914323419136\cite{b19}\\
 $\mathcal{G}_{12}$ & 64 & 4&$20282409603651670423947251286016^{\ast} $&20282409603651670423947251286016\cite{b19}\\
 $\mathcal{G}_{12}$ & 80 & 4&$ 87112285931760246646623899502532662132736^{\dagger}$ &-\\
 $\mathcal{G}_{12}$ & 96 & 2&$ 24519928653854221733733552434404946937899825954937634816^{\dagger} $&-\\
 $\mathcal{G}_{12}$ & 112 & 2&$105312291668557186697918027683670432318895095400549111254310977536^{\diamond}$ &6582018229284824168619876730229402019930943462534319453394436096\cite{b19}\\
 $\mathcal{G}_{12}$ & 128 & 2&$452312848583266388373324160190187140051835877600158453279131187530910662656^{\ast}$ &452312848583266388373324160190187140051835877600158453279131187530910662656\cite{b19}\\
 \hline
 $\mathcal{G}_{22}$ & 32 & 4&$281474976710656^{\ast}$ &281474976710656\cite{b19}\\
 $\mathcal{G}_{22}$ & 48 & 7&$ 1152921504606846976^{\dagger} $&-  \\
 $\mathcal{G}_{22}$ & 64 & 4&$20282409603651670423947251286016^{\ast}$&20282409603651670423947251286016\cite{b19}\\
 $\mathcal{G}_{22}$ & 80 & 3&$ 22300745198530623141535718272648361505980416^{\dagger}$&- \\
 $\mathcal{G}_{22}$ & 80 & 4&$ 340282366920938463463374607431768211456^{\dagger}$ &-\\
 $\mathcal{G}_{22}$ & 96 & 2&$ 24519928653854221733733552434404946937899825954937634816^{\dagger}$ &-\\
 $\mathcal{G}_{22}$ & 112 & 2&$105312291668557186697918027683670432318895095400549111254310977536^{\diamond}$ &6582018229284824168619876730229402019930943462534319453394436096\cite{b19}\\
 $\mathcal{G}_{22}$ & 128 & 2&$452312848583266388373324160190187140051835877600158453279131187530910662656^{\ast}$ &452312848583266388373324160190187140051835877600158453279131187530910662656\cite{b19}\\
 \hline
 $\mathcal{G}_{32}$ & 32 & 4&$281474976710656^{\ast}$ &281474976710656\cite{b19}\\
 $\mathcal{G}_{32}$ & 48 & 3&$ 1208925819614629174706176^{\dagger}$ &-\\
 $\mathcal{G}_{32}$ & 80 & 3&$ 22300745198530623141535718272648361505980416^{\dagger}$ &-\\
 $\mathcal{G}_{32}$ & 80 & 4&$ 87112285931760246646623899502532662132736^{\dagger}$ &-\\
 $\mathcal{G}_{32}$ & 96 & 2&$ 24519928653854221733733552434404946937899825954937634816^{\dagger} $&-\\
 $\mathcal{G}_{32}$ & 96 & 4&$ 23384026197294446691258957323460528314494920687616^{\dagger}$ &-\\
 $\mathcal{G}_{32}$ & 112 & 2&$105312291668557186697918027683670432318895095400549111254310977536^{\diamond} $ &6582018229284824168619876730229402019930943462534319453394436096\cite{b19}\\
\bottomrule
\end{tabular}

\begin{tablenotes}
\footnotesize
\item[]$\ast$: equal to the currently best known data; $\diamond$: better than currently best known data; $\dagger$: regarding the new $d$'s data.
\end{tablenotes}
\end{threeparttable}
\end{sidewaystable}

\begin{sidewaystable}[htp]
\caption{Lower bounds on $A_{4}^{RC}(n, d)$ from $\mathbf{G}_{ijkl}$, where $i,j,k,l \in \{ 1,2,3\}$}
\centering
\begin{threeparttable}
\tiny
\begin{tabular}{ccccc}
 \toprule
 \text{Generator Matrix} & $n$ & $d$ & $A_{4}^{RC}(n,d)$ & \text{Best Known}$A_{4}^{RC}(n,d)$ \\
 \midrule
$\mathbf{G}_{2222}$ & 32 & 3&$ 281474976710656^{\dagger}$ &-\\
$\mathbf{G}_{1111}$ & 32 & 4&$281474976710656^{\ast}$ &281474976710656\cite{b19}\\
$\mathbf{G}_{1111}$ & 48 & 3&$ 1208925819614629174706176^{\dagger}$ &-\\

$\mathbf{G}_{1111}$ & 80 & 4&$ 87112285931760246646623899502532662132736^{\dagger}$ &-\\
$\mathbf{G}_{2222}$ & 96 & 2&$ 95780971304118053647396689196894323976171195136475136^{\dagger}$ &-\\
$\mathbf{G}_{2222}$ & 96 & 3&$ 23384026197294446691258957323460528314494920687616^{\dagger}$ &-\\
$\mathbf{G}_{1111}$ & 112 & 2&$105312291668557186697918027683670432318895095400549111254310977536^{\diamond}$ &6582018229284824168619876730229402019930943462534319453394436096\cite{b19}\\
$\mathbf{G}_{1111}$ & 128 & 4&$ 26959946667150639794667015087019630673637144422540572481103610249216^{\dagger}$ &-\\
$\mathbf{G}_{1111}$ & 160 & 3&$ 32592575621351777380295131014550050576823494298654980010178247189670100796213387298934358016^{\dagger}$ &-\\
\hline
$\mathbf{G}_{2111}$ & 32 & 4&$281474976710656^{\ast}$ &281474976710656\cite{b19}\\
$\mathbf{G}_{3111}$ & 48 & 3&$1208925819614629174706176^{\dagger}$ &-\\
$\mathbf{G}_{1121}$ & 64 & 4&$1267650600228229401496703205376 $ &20282409603651670423947251286016\cite{b19}\\
$\mathbf{G}_{3111}$ & 80 & 4&$ 87112285931760246646623899502532662132736^{\dagger}$ &-\\
$\mathbf{G}_{1222}$ & 96 & 3&$95780971304118053647396689196894323976171195136475136^{\dagger}$ &-\\
$\mathbf{G}_{3111}$ & 112 & 2&$105312291668557186697918027683670432318895095400549111254310977536
^{\diamond}$ &6582018229284824168619876730229402019930943462534319453394436096\cite{b19}\\
$\mathbf{G}_{1121}$ & 128 & 4&$26959946667150639794667015087019630673637144422540572481103610249216^{\dagger}$&-\\
$\mathbf{G}_{3111}$ & 160 & 2&$32592575621351777380295131014550050576823494298654980010178247189670100796213387298934358016^{\dagger}$&-\\
$\mathbf{G}_{2111}$ & 160 & 3&$32592575621351777380295131014550050576823494298654980010178247189670100796213387298934358016^{\dagger}$&-\\
\hline
$\mathbf{G}_{3132}$ & 32 & 4&$281474976710656^{\ast}$ &281474976710656\cite{b19}\\
$\mathbf{G}_{2113}$ & 64 & 4&$1267650600228229401496703205376 $ &20282409603651670423947251286016\cite{b19}\\
$\mathbf{G}_{1322}$ & 96 & 2&$ 95780971304118053647396689196894323976171195136475136^{\dagger}$ &-\\
$\mathbf{G}_{3121}$ & 128 & 4&$26959946667150639794667015087019630673637144422540572481103610249216^{\dagger}$&-\\
$\mathbf{G}_{3121}$ & 160 & 2&$32592575621351777380295131014550050576823494298654980010178247189670100796213387298934358016^{\dagger}$
&-\\
\bottomrule
\end{tabular}

\begin{tablenotes}
\footnotesize
\item[]$\ast$: equal to the currently best known data; $\diamond$: better than currently best known data; $\dagger$: regarding the new $d$'s data; no markings: no better than preceding data.
\end{tablenotes}
\end{threeparttable}
\end{sidewaystable}

\vskip 4mm

\newpage


\begin{thebibliography}{l}
\bibitem{b1}N. Aboluion, D. H. Smith, S. Perkins,\emph{ Linear and nonlinear constructions of DNA codes with Hamming distance d, constant GC-content and a reverse-complement constraint}, Discrete Math., vol. 312, pp. 1062-1075, 2012.
\bibitem{b2}T. Abualrub, A. Ghrayeb, X. N. Zeng, \emph{Construction of cyclic codes over $GF(4)$ for DNA computing}, J. Franklin Inst., vol. 343, pp. 448-457, 2006.
\bibitem{b3}L. M. Adleman,\emph{ Molecular computation of solutions to combinatorial problems}, Science, vol. 266, no. 5187, pp. 1021-1024, 1994.
\bibitem{b4}L. M. Adleman, P. W. K. Rothemund, S. Rowies, E. Winfree,\emph{ On applying molecular computation to the data encryption standard}, J. Comput. Biol., vol. 6, pp. 53-63, 1999.
\bibitem{b5}K. G. Benerjee, A. Banerjee,\emph{ On DNA codes with multiple constraints}, IEEE Commun. Lett., vol. 25, no.2, pp. 365-368, 2021.
\bibitem{b21}K. G. Benerjee, S. Deb, M. K. Gupta, \emph{On conflict free DNA codes}, Cryptogr. Commun., vol. 13, pp. 143-171, 2021.
\bibitem{b9}M. Blawat, et al., \emph{Forward error correction for DNA data storage}, Procedia Comput. Sci., vol. 80, pp. 1011-1022, 2016.
\bibitem{b6}D. Boneh, C. Dunworth, R. Lipton, \emph{Breaking DES using molecular computer}, Princenton CS Tech-Report, Number CS-TR-489-95, 1995.
\bibitem{b35}J. Bornholt, R. Lopez, D. M. Carmean, L. Ceze, G. Seelig, K. Strauss, \emph{A DNA-based archival storage system}, IEEE Micro, pp. 99, 2017.
\bibitem{b8}W. Bosma, J. Cannon, C. Playoust, \emph{The Magma algebra system. I. The user language}, J. Symbolic Comput., vol. 24, pp. 235-265, 1997.
\bibitem{b10}Y. Cengellenmis, A. Dertli, S.T. Dougherty, A. Korban, S. Sahinkaya, D. Ustun,\emph{ Reversible G-codes over the ring $\mathcal{F}_{j,k}$ with applications to DNA codes}, Adv. Math. Commun., vol.17, no.6, pp. 1406-1421, 2022.
\bibitem{b11}Y. M. Chee, S. Ling, \emph{Improved lower bounds for constant GC-content DNA codes}, IEEE Trans. Inf. Theory, vol. 54, no.1, pp. 391-394, 2008.
\bibitem{b12}G. M. Church, Y. Gao, S. Kossuri, \emph{Next-generation digital information storage in DNA}, Science, vol. 337, pp. 1628, 2012.
\bibitem{b13}S. T. Dougherty, J. Gildea, R. Taylor, A. Tylshchak, \emph{Group rings, G-codes and constructions of self-dual and formally self-dual codes}, Des. Codes Cryptogr., vol. 86, pp. 2115-2138, 2018.
\bibitem{b14}S. T. Dougherty, J. Gildea, A. Korban, A. Kaya, \emph{New extremal self-dual binary codes of length 68 via composite construction, $F_{2}+uF_{2}$ lifts, extensions and neighbors}, Int. J. Inf. Coding Theory, vol. 5, no. 3-4, pp. 211-226, 2020.
\bibitem{b15}S. T. Dougherty, J. Gildea, A. Korban, A. Kaya, \emph{Composite constructions of self-dual codes from group rings and new extremal self-dual codes of length $68$}, Adv. Math. Commun., vol. 14, no. 4, pp. 677-702, 2020. 
\bibitem{b16}S. T. Dougherty, A. Korban, S. Sahinkaya, D. Ustun, \emph{Group matrix ring codes and constructions of self-dual codes}, Appl. Algebra Engrg. Comm. Comput., vol. 34, pp. 279-299, 2021.
\bibitem{b17}S. T. Dougherty, J. Gildea, A. Korban, \emph{Extending an established isomorphism between group rings and a subring of the $n \times n$ matrices}, Internat. J. Algebra Comput., vol. 31, no. 3, pp. 471-490, 2021.
\bibitem{b18}S. T. Dougherty, J. Gildea, A. Korban, A. Kaya, \emph{Composite matrices from group rings, composite G-codes and constructions of self-dual codes}, Des. Codes Cryptogr., vol. 89, pp. 1615-1638, 2021.    
\bibitem{b19}S. T. Dougherty, A. Korban, S. Sahinkaya, D. Ustun, \emph{Construction of DNA codes from composite matrices and a bio-inspired optimization algorithm}, IEEE Trans. Inform. Theory, vol. 69, no. 3, pp. 1588-1603, 2023.
\bibitem{b20}P. Gaborit, O. D. King, \emph{Linear constructions for DNA codes}, Theoret. Comput. Sci., vol. 334, no.1-3, pp. 99-113, 2005.
\bibitem{b23}T. Hurley, \emph{Group rings and rings of matrices}, Int. Jour. Pure and Appl. Math, vol. 31, pp. 319-335, 2006.
\bibitem{b22}S. Jain, F.  Farnoud, M. Schwartz, J. Bruck, \emph{Duplication-correcting codes for data storage in the DNA of living organisms}, IEEE Trans. Inform. Theory, vol. 63, no. 8, pp. 4996-5010, 2017.
\bibitem{b33}L. Kari, S. Konstantinidis, E. Losseva, P. Sosik, G. Thierrin, \emph{Hairpin structures in DNA words}, in DNA Computing, pp. 158-170, 2006.
\bibitem{b24}H. J. Kim, W-H. Choi, Y. Lee, \emph{Designing DNA codes from reversible self-dual codes over $GF(4)$}, Discrete Math., vol. 344, no. 1, pp.112159, 2021.
\bibitem{b25} O. D. King, \emph{ Bounds for DNA codes with constant GC-content}, Electron. J. Comb., vol. 10, no. 1, pp. 33, 2003.
\bibitem{b26}A. Korban, S. Sahinkaya, D. Ustun,\emph{ A novel genetic search scheme based on nature-inspired evolutionary algorithms for self-dual codes}, arXiv:2012.12248.
\bibitem{b27}A. Korban, S. Sahinkaya, D. Ustun, \emph{Reversible $G^{k}$-codes with applications to DNA codes}, Des. Codes Cryptogr., vol. 90, pp.1679-1694, 2022.
\bibitem{b34}M. Kova\v{c}evi\'{c}, V. Y. F. Tan, \emph{Asymptotically optimal codes correcting fixed-length duplication errors in DNA storage systems}, IEEE Commun. Lett., vol. 22, no. 11, pp. 2194-2197, 2018.
\bibitem{b28}M. Mansuripur, P. K. Khulbe, S. M. Kuebler, J. W. Perry, M. S. Giridhar, N. Peyghambarian, \emph{ Information storage and retrieval using macromolecules as storage media}, arXiv:1708.08018.
\bibitem{b29}A. Marathe, A. E. Condon, R. M. Corn,\emph{ On combinatorial DNA word design}, J. Comput. Biol., vol. 8, pp. 201-220, 2001.
\bibitem{b36}O. Milenkovic, N. Kashyap, \emph{DNA codes that avoid secondary structures}, in Proc. Int. Symp. Inf. Theory (ISIT), pp. 288-292, 2005.
\bibitem{b31}E. S. Oztas, B. Yildiz, I. Siap, \emph{A novel approach for constructing reversible codes and applications to DNA codes over the ring $F_{2}[u]/(u^{2k}-1)$}, Finite Fields Appl., vol. 46, pp. 217-234, 2017.
\bibitem{b30}W. Song, K. Cai, M. Zhang, C. Yuen,\emph{ Codes with run-length and GC-content constraints for DNA-based data storage}, IEEE Commun. Lett., vol. 22, no. 10, pp. 2004-2007, 2018.
\bibitem{b32}Q. Zhang, C. Xu, \emph{Lower bounds of DNA codes with reverse constraint.} J. Comput. Theor. Nanoscience, vol.7, no. 10, pp. 2072-2076, 2010.
\end{thebibliography}
\end{document}